\documentclass[letterpaper,journal]{IEEEtran}
\usepackage[T1]{fontenc}
\usepackage{amsmath, amssymb, amsfonts}
\usepackage{amsthm}
\usepackage{graphicx}
\usepackage{cite}
\usepackage{algorithm}
\usepackage{algpseudocode}
\DeclareMathOperator*{\argmax}{arg\,max}

\graphicspath{{./}{figs/}}
\providecommand{\safeincludegraphics}[2][]{%
  \IfFileExists{#2}{\includegraphics[#1]{#2}}{%
    \IfFileExists{figs/#2}{\includegraphics[#1]{figs/#2}}{\fbox{\texttt{Missing: #2}}}%
  }%
}

\usepackage{booktabs}
\usepackage{multirow}
\usepackage{array}
\usepackage{tabularx}
\usepackage{siunitx}
\usepackage{tikz}
\usepackage{pgfplots}
\pgfplotsset{compat=1.18}

\usepackage{dblfloatfix} 

\usepackage[hidelinks]{hyperref}
\hypersetup{
    pdftitle={Cooperative Edge Caching with Large Language Model in Wireless Networks},
}
\newtheorem{theorem}{Theorem}

\newtheorem{proposition}{Proposition}

\theoremstyle{definition}

\newcommand{\E}{\mathbb{E}}
\newcommand{\Ind}{\mathbf{1}} 
\newcommand{\NoOp}{\ensuremath{\text{\texttt{NoOp}}}}
\newcommand{\Invalid}{\ensuremath{\text{\texttt{Invalid}}}}

\title{Cooperative Edge Caching with Large Language Model in Wireless Networks}
\author{Ning~Yang,~\IEEEmembership{Member,~IEEE,} Wentao~Wang, Lingtao~Ouyang, and Haijun~Zhang,~\IEEEmembership{Fellow,~IEEE}%
\thanks{Corresponding author: Haijun Zhang.}%
\thanks{Ning Yang is with the Institute of Automation, Chinese Academy of Sciences, Beijing, 100190, China (e-mail: ning.yang@ia.ac.cn).}%
\thanks{Wentao Wang is with the Leicester International Institute, Dalian University of Technology, Dalian, Liaoning, 124221, China (e-mail: shiyanxi1@mail.dlut.edu.cn).}%
\thanks{Lingtao Ouyang is with the School of Management, Xi'an Jiaotong University, Xi'an, Shaanxi, 710049, China (e-mail: 2386558088@stu.xjtu.edu.cn).}%
\thanks{Haijun Zhang is with the School of Computer and Communication Engineering, University of Science and Technology Beijing, Beijing, 100083, China (e-mail: zhanghaijun@ustb.edu.cn).}%
}
\begin{document}
\maketitle

\begin{abstract}
Cooperative edge caching within spatially overlapping coverage zones introduces intricate coupling among Base Station (BS) caching decisions, rendering online content replacement highly sensitive to both spatial topology and temporal content reuse. Conventional heuristics often suffer from myopia, while Deep Reinforcement Learning agents typically rely on brittle numerical state/action representations and necessitate computationally prohibitive retraining under topological or traffic dynamics. This paper studies a centralized, cooperative multi-BS cache-replacement controller driven by a Large Language Model (LLM) within a deterministic text-to-action execution loop. At each time slot, the global cache state is deterministically rendered into a prompt encapsulating the cache inventory of each BS, the current set of deduplicated requests, and multi-scale frequency summaries. The LLM generates exactly one decision line per BS, after which a strict parser and feasibility checker either accept a feasible joint action or deterministically fall back to the all-BS no-operation (\NoOp) action. We align the LLM through a two-stage training paradigm: initially performing completion-only Supervised Fine-Tuning on look-ahead expert trajectories to acquire action syntax and robust initialization, followed by Group Relative Policy Optimization. The latter employs an ``opportunity-aware'' reward mechanism, utilizing the multi-step cooperative hit rate gain relative to a \NoOp\ baseline as the primary signal, alongside explicit penalties for invalid outputs. We focus on a reactive replacement setting with equal-sized files, at most one replacement per BS per slot, and insertion candidates restricted to current-slot requests. Under a frozen-trajectory evaluation protocol that reuses identical request traces and association graphs across methods, the proposed orchestrator approaches a single-step exhaustive-search reference (0.610 vs.\ 0.617 in a standard 5-BS scenario), surpasses classical baselines (e.g., +4.1\% over least-frequently used), and exhibits robust zero-shot transfer across cache capacity, library size, popularity skewness, and user density. Code is available at \url{https://github.com/gracefulning/CoopLLM-Cache}.
\end{abstract}

\begin{IEEEkeywords}
Edge caching, large language models, supervised fine-tuning, GRPO, strict action parsing, unified evaluation.
\end{IEEEkeywords}

\section{Introduction}
\IEEEPARstart{M}{obile} data traffic in 5G and emerging 6G systems continues to increase rapidly due to the proliferation of video streaming, interactive applications, and content generated by artificial intelligence, placing sustained pressure on backhaul capacity and cloud infrastructures~\cite{CiscoVNI2019,Li2024SurveyEdgeCaching}. Edge caching alleviates this burden by storing frequently requested content closer to users and by updating cache inventories as demand evolves, thereby reducing latency and backhaul load~\cite{6763007,6600983}. While early research focused on isolated or non-overlapping cells, modern heterogeneous deployments are increasingly user-centric: coverage regions of multiple base stations (BSs) overlap, and users may be served by different BS subsets over time~\cite{8169053}. In such dense deployments, cache decisions across BSs interact non-trivially, rendering independent per-BS policies suboptimal.

Existing
edge caching strategies include popularity-based placement, probabilistic caching, and coded caching~\cite{6763007,Li2018HierarchicalEdgeCaching,Azimi2017FogEdge}. These methods typically rely on a stationary or slowly varying popularity model and often assume a fixed mapping between users and serving BSs. More recent designs incorporate spatial correlation and cooperative placement across BSs or Fog and Cloud layers~\cite{8169053,jiang2020deep}, but still depend on hand-crafted models (e.g., Zipf popularity, fixed mobility patterns) and task-specific analytical approximations. In realistic multi-cell deployments, however, popularity is heterogeneous across user groups, time-varying, and only observable through noisy request traces. This gap motivates learning-based control that can adapt to unknown popularity and network dynamics.
Learning-driven edge caching has therefore received increasing attention~\cite{Li2024SurveyEdgeCaching}. Supervised methods use demand prediction or collaborative filtering to infer future file popularity and then solve a deterministic placement problem~\cite{Zhang2023DeviceEdgeCachingIOTJ}. Reinforcement Learning (RL) approaches treat caching as a Markov Decision Process (MDP) and learn policies that maximize cache-hit related rewards using Q-learning, policy gradients, or actor--critic methods~\cite{Sadeghi2017OptimalScalableCaching,Chien2020QLearningEdgeCaching,Yang2020CacheAidedNOMA,Yang2025BeyondEdge}. Deep Reinforcement Learning (DRL) further integrates deep neural networks for value or policy approximation and has been widely explored for dynamic edge caching~\cite{Wu2019DynamicContentUpdate,Wang2024AdaptiveEdgeCaching,zhang2024communication,Wei2024MetaRLCoopCaching}. DRL-based designs can, in principle, cope with unknown and time-varying popularity and multi-cell coupling~\cite{8964499}. However, these systems remain tightly bound to a specific state and action parametrization and simulator, require large online interaction budgets for training, and often struggle to generalize when the network topology or traffic pattern changes drastically~\cite{9109574}.

On the other hand, Large Language Models (LLMs) have emerged as powerful general-purpose decision makers, leveraging their inherent reasoning capabilities to operate on textual descriptions of tasks, constraints, and feedback~\cite{yang2023foundationmodelsdecisionmaking}. Recent comprehensive surveys indicate that LLMs can support a wide range of telecom tasks, including configuration generation, fault diagnosis, and high-level optimization~\cite{Zhou2024LLMTelecomSurvey,Yang2025DecisionMakingSurvey}. In wireless networks, LLM agents have been proposed to orchestrate complex 6G functionalities~\cite{xu2024large}, drive semantic communication~\cite{Zhao2024LaMoSC}, and act as radio access network (RAN) applications (xApps) for resource management~\cite{Wu2025LLMxApp}, while LLM-enabled edge inference frameworks optimize the placement and scheduling of LLM computation across wireless edge servers~\cite{Zhang2025BeyondCloudLLM}. These works highlight the promise of LLMs as general decision engines that can ingest rich textual context, but they mostly use LLMs as high-level assistants, off-line solvers, or part of a hybrid pipeline, rather than as fully code-driven controllers interacting with the same environment interface used at evaluation time.

Motivated by the limitations of conventional heuristics in handling multi-cell interference and the poor generalization of DRL agents under drastic dynamics, we propose an LLM-based multi-BS edge caching framework in which a centralized coordinator uses the LLM as the learned decision module within a strict textual interface. The specific scope considered in this paper is centralized cooperative reactive cache replacement under three simplifying assumptions, following a constrained replacement setting similar to prior DRL-based edge-caching studies~\cite{Zhong2020DRLEdgeCaching}: equal file sizes, at most one replacement per BS per slot, and insertion candidates restricted to files requested in the current slot. To effectively align the general-purpose decision maker with this rigorous control task, we employ a two-stage training paradigm consisting of Supervised Fine-Tuning (SFT) on expert demonstrations followed by Group Relative Policy Optimization (GRPO)~\cite{Shao2024DeepSeekMath} for reward-driven improvement. Our main contributions are as follows:

\begin{itemize}
\item This paper formalizes a unified multi-BS cooperative edge caching environment characterized by overlapping cells, heterogeneous grouped Zipf user preferences, and multi-scale frequency statistics. Under the executable formulation studied here, the resulting online control problem is reactive cache replacement. Crucially, this formulation does not require any parametric popularity model. The environment exposes a strict textual state and action interface, ensuring that the LLM operates under the same constraints as practical controllers.
\item We develop a fully code-driven training pipeline utilizing a tailored GRPO fine-tuning scheme. We first initialize the policy via SFT to ensure the model produces valid syntax. Subsequently, we optimize an ``opportunity-aware'' reward that quantifies the multi-step cooperative hit rate gain relative to a \NoOp\ baseline. This design enables the LLM to learn complex coordination strategies while strictly adhering to validity constraints.
\item This study proposes a frozen-trajectory evaluation protocol to ensure fair comparison. For each traffic seed, the request process is fixed so that all methods process identical inputs under strict parsing and sanitization. This rigorous protocol eliminates randomness in request generation and allows for the precise quantification of both decision latency and policy quality.
\item Experimental results demonstrate that the proposed orchestrator effectively masters cooperative logic, reaching \textbf{98.9\%} of the single-step exhaustive-search reference in the standard 5-BS scenario and even slightly outperforming it in 2-BS settings (0.542 vs.\ 0.536). Moreover, the proposed approach exhibits robust zero-shot transferability across varying network scales and traffic statistics, highlighting its potential for flexible deployment in wireless networks without the need for frequent retraining.
\end{itemize}

\section{Related Work}

\subsection{Analytical and Heuristic Edge Caching}
The coded caching framework~\cite{6763007} established fundamental gains via structured placement, while systems like FemtoCaching~\cite{6600983} and device-to-device caching~\cite{Li2018HierarchicalEdgeCaching} showed that proximity-based storage substantially improves throughput and latency. Subsequent works extended these to fog-aided~\cite{Azimi2017FogEdge} and cooperative multi-cell architectures~\cite{8169053}. Although recent efforts incorporate temporal dynamics through freshness-aware caching~\cite{abolhassani2022fresh} and online learning~\cite{mhaisen2023online,wang2024similarity}, most designs rely on optimization under parametric demand assumptions, requiring relaxations for tractability. In cooperative multi-cell settings, strong spatial coupling and an exponential decision space often lead to decoupled approximations that sacrifice coordination gains.

To bridge caching decisions with wireless constraints, cross-layer optimization jointly considers communication, energy, and physical-layer mechanisms. Such formulations can optimize energy-latency trade-offs~\cite{abolhassani2022fresh,luo2022cooperative} and integrate advanced technologies like simultaneous transmitting and reflecting surfaces~\cite{hu2024joint}, but they typically yield complex mixed-integer programs. Consequently, these methods incur high computational overhead and require accurate demand models, limiting real-time deployment and adaptability to evolving constraints.

\subsection{Learning-Based and DRL-Based Edge Caching}
To cope with unknown and time-varying popularity, learning-based caching approaches leverage historical requests to infer future demand. Prediction-based methods first estimate future demand using time-series models or deep networks and then solve a deterministic placement problem~\cite{Zhang2023DeviceEdgeCachingIOTJ}. While effective when predictions are accurate, this two-stage pipeline can be brittle: forecast errors directly propagate to placement decisions, and retraining is often required when the topology or serving graph changes.

Reinforcement learning instead optimizes cache updates directly from interaction data. Sadeghi \textit{et al.}~\cite{Sadeghi2017OptimalScalableCaching} and Chien \textit{et al.}~\cite{Chien2020QLearningEdgeCaching} applied RL and Q-learning to small-cell and collaborative caching, while Wu \textit{et al.}~\cite{Wu2019DynamicContentUpdate} and Zhong \textit{et al.}~\cite{Zhong2020DRLEdgeCaching} leveraged deep function approximation to handle dynamic popularity and large discrete action spaces. To improve scalability in cooperative settings, Zhang \textit{et al.}~\cite{zhang2024communication} incorporated federated training, and multi-agent designs have been explored for distributed coordination~\cite{zhou2023twc_iov,zhou2023recommendation}. DRL has also been used for large-scale adaptive caching and coded caching under uncertain popularity~\cite{Wang2024AdaptiveEdgeCaching,zhang2020deep}. While DRL offers adaptability, it suffers from poor sample efficiency and overfitting to specific topologies. A minor change in the number of users or cache size often renders a trained DRL policy invalid, necessitating computationally expensive retraining from scratch~\cite{Cheng2023RLImpactSettings}, and the learned policies are often tied to specific numerical state/action encodings and simulators, limiting transferability without substantial retraining.

\subsection{LLM-Enabled Wireless Networking and Caching}
LLMs are increasingly explored as general-purpose assistants and controllers in telecommunications and wireless networks, supporting tasks such as configuration synthesis, code generation, and optimization-oriented reasoning~\cite{Zhou2024LLMTelecomSurvey,Yang2025DecisionMakingSurvey}. Representative examples include multimodal LLM agents for 6G networking~\cite{xu2024large}, LLM-enabled xApps for near-real-time RAN control~\cite{Wu2025LLMxApp}, and LLM-driven semantic communication systems for visual transmission~\cite{Zhao2024LaMoSC}. In parallel, edge inference frameworks such as ``Beyond the Cloud''~\cite{Zhang2025BeyondCloudLLM} study how to partition and schedule LLM inference across wireless edge nodes to meet latency and communication constraints. These studies highlight the promise of LLMs to reason over heterogeneous context, but also underscore practical challenges in grounding outputs to strict feasibility constraints and meeting tight runtime budgets. At the networking layer, Wu \textit{et al.}~\cite{Wu2024NetLLM} further demonstrated that LLM-centric workflows can be adapted to diverse network operation tasks, and broader discussions anticipate sustained impact on automation and troubleshooting pipelines~\cite{maatouk2024large}.

However, the intersection of LLMs and edge caching has thus far focused predominantly on \emph{model caching}, namely optimizing the partitioning and placement of LLM parameters or intermediate states at the edge to minimize inference latency~\cite{xu2024large,Zhang2025BeyondCloudLLM}. In these works, the LLM is the object to be cached rather than the decision engine for conventional content placement. Using LLMs as direct, online controllers for content caching, with outputs verifiably aligned to feasibility constraints under a strict execution interface, remains relatively underexplored.

\begin{figure}[htbp]
\centering
\includegraphics[width=\linewidth]{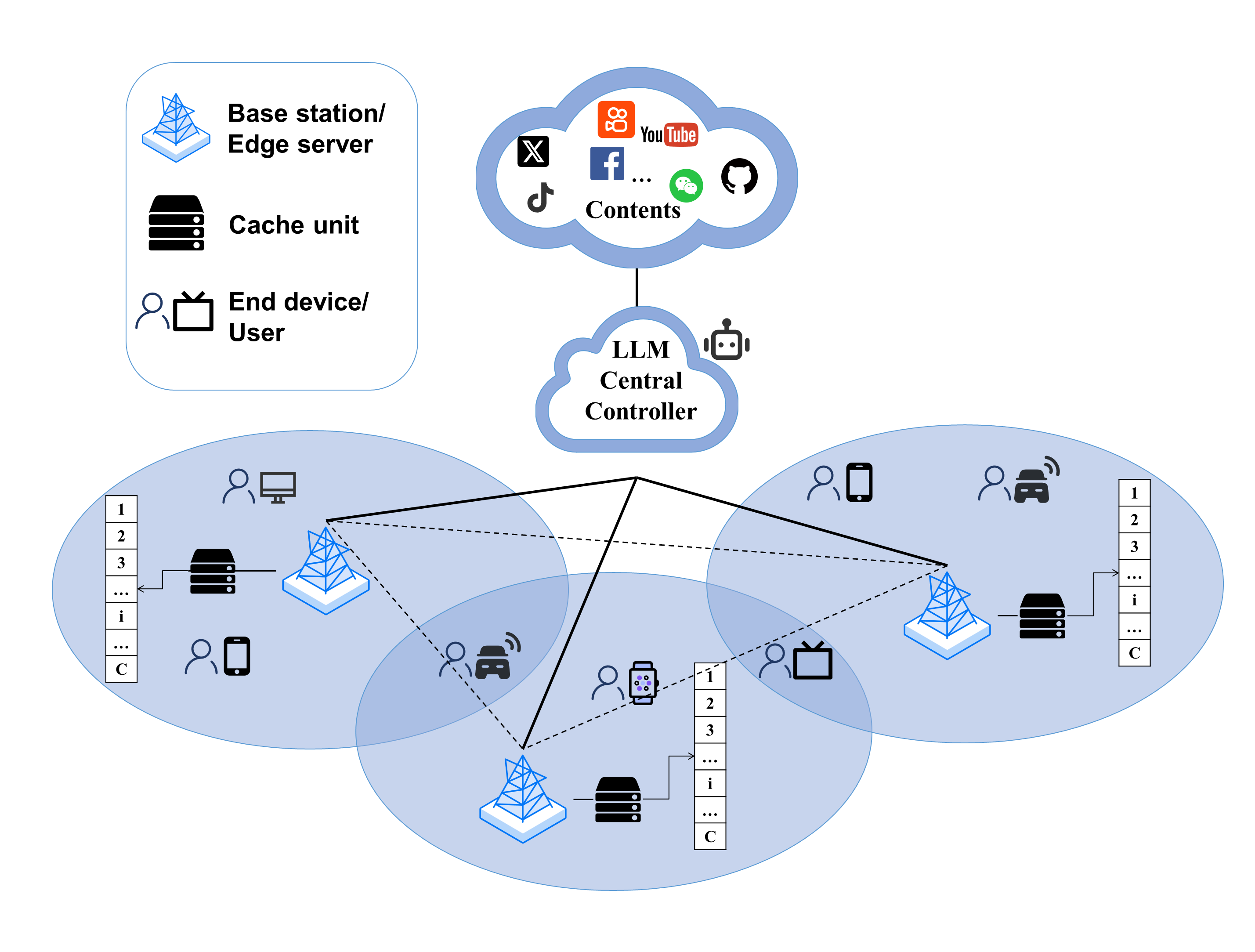}
\caption{Multi-cell cooperative edge caching architecture with overlapping BS coverage.}
\label{fig:overall}
\end{figure}

\section{Cooperative Multi-BS Cache Replacement: System Model and Problem Formulation}
\label{sec:system_model_problem}

\subsection{System Model}
We consider a cooperative multi-cell edge caching architecture as illustrated in Fig.~\ref{fig:overall}. The origin server stores the full content library $\mathcal{F}=\{1,\dots,F\}$ and serves requests that miss the cache through the backhaul. At the network edge, a set of $B$ cache-enabled Base Stations, denoted by $\mathcal{B}=\{1,\dots,B\}$, are deployed. Each BS $b \in \mathcal{B}$ is co-located with an edge server possessing a storage capacity of $C_b$ files, where all files are assumed to have equal size. Time is slotted with index $t\in\{1,2,\dots\}$. Although the caches are geographically distributed across BSs, the controller studied in this paper is centralized: at each slot, a single coordinator observes the global state and outputs one action for every BS.

Due to the dense deployment, coverage regions of BSs may overlap, meaning a user can be within the communication range of multiple BSs simultaneously. Let $\mathcal{U}^{(t)}$ denote the set of active users at slot $t$, and let $\mathcal{B}_u \subseteq \mathcal{B}$ denote the set of BSs covering user $u$. Under this cooperative framework, a user request is considered a cache hit and served locally if the content is available in any BS within the neighborhood $\mathcal{B}_u$; otherwise, it is retrieved from the origin server. We assume the user-BS association is fixed within each slot but may evolve slowly across slots due to user mobility.

\paragraph{Request Model}
Each user $u \in \mathcal{U}^{(t)}$ generates one content request $f_u^{(t)} \in \mathcal{F}$ per slot. While the underlying request process is unknown to the caching controller and may exhibit temporal concept drift, it is typically modeled using heavy-tailed distributions such as the Zipf distribution~\cite{6600983} to capture the realistic characteristic that a small subset of popular contents dominates the aggregate demand. For each BS $b$, the locally observed requests at slot $t$ constitute a multiset, as multiple users may request the same file concurrently. We capture this local request intensity by the request count $n_{b,f}^{(t)}$, defined as:
\begin{equation}
    \label{eq:request_count}
    n_{b,f}^{(t)} \triangleq \left|\left\{u \in \mathcal{U}^{(t)} :\, b \in \mathcal{B}_u,\, f_u^{(t)} = f\right\}\right|.
\end{equation}
To determine candidate contents for caching, we deduplicate this multiset into an admissible insertion set $\mathcal{R}^{(b)}_t$:
\begin{equation}
    \label{eq:admissible_set}
    \mathcal{R}^{(b)}_t \triangleq \{ f \in \mathcal{F} : n_{b,f}^{(t)} > 0 \} = \{ f_u^{(t)} : u \in \mathcal{U}^{(t)}, b \in \mathcal{B}_u \}.
\end{equation}
This set enforces a reactive caching policy where only contents requested within the current slot are eligible for insertion. Accordingly, the proposed scheme addresses online cache replacement rather than fully proactive prefetching. Meanwhile, the count $n_{b,f}^{(t)}$ serves as an auxiliary state feature that preserves the intensity of instantaneous demand.

\paragraph{Cache State and Update Model}
Let $x_{b,f}^{(t)} \in \{0,1\}$ be a binary indicator variable where $x_{b,f}^{(t)}=1$ implies that file $f$ is stored at BS $b$ during slot $t$. We denote the caching vector of BS $b$ as $\mathbf{x}_b^{(t)} = (x_{b,1}^{(t)}, \dots, x_{b,F}^{(t)})$. Consequently, the global caching configuration at slot $t$ is represented by the matrix $\mathbf{X}^{(t)} \in \{0,1\}^{B \times F}$, formed by stacking the vectors $\{\mathbf{x}_b^{(t)}\}_{b \in \mathcal{B}}$.

To ensure practical online operation and limit backhaul bandwidth consumption during content fetching, we adopt a constrained update interface. Specifically, each BS performs at most one cache update per slot. It can either maintain the current state, denoted as \NoOp, or perform a single-file replacement by evicting one existing file to insert one new file from the admissible set. Under this update model, the size of the local action space $D_b^{(t)}$ for BS $b$ is given by:
\begin{equation}
    D_b^{(t)} = C_b \cdot |\mathcal{R}^{(b)}_t| + 1.
\end{equation}
The action space consists of the \NoOp\ action plus all feasible distinct pairs of an eviction target and an insertion candidate.

\subsection{Problem Formulation}

Building on the system model, we formulate the cooperative caching control problem where a centralized coordinator updates the BS caches to maximize the long-term network-wide cache hit rate.

\textbf{Optimization Constraints}
The caching decisions must satisfy the following constraints at every slot $t$:
\begin{itemize}
    \item Storage Capacity: $\sum_{f=1}^{F} x_{b,f}^{(t)} \le C_b$ for all $b \in \mathcal{B}$.
    \item Update Limit: The transition from $\mathbf{X}^{(t-1)}$ to $\mathbf{X}^{(t)}$ must comply with the single-replacement rule. Mathematically, this restricts the Hamming distance between consecutive states such that $\sum_{f=1}^{F} |x_{b,f}^{(t)} - x_{b,f}^{(t-1)}| \le 2$ for all $b \in \mathcal{B}$.
\end{itemize}
Let $\mathbf{X} = \{\mathbf{X}^{(t)}\}_{t \ge 1}$ denote the sequence of caching decisions over time.

\textbf{Performance Metric: Cooperative Hit Rate}
Let $\mathcal{Q}^{(t)} = \{(u, f_u^{(t)}) : u \in \mathcal{U}^{(t)}\}$ denote the set of user-request pairs at slot $t$. A request is considered a cache hit if the requested file is available in the cache of at least one BS covering the user. Ideally, the network scheduler directs the user to a specific serving BS $b \in \mathcal{B}_u$ that holds the content. We define the cooperative availability indicator $\Ind(u, f_u^{(t)})$ as:
\begin{equation}
    \Ind(u, f_u^{(t)}) = 
    \begin{cases} 
        1, & \text{if } \sum_{b \in \mathcal{B}_u} x_{b,f_u^{(t)}}^{(t)} \ge 1, \\
        0, & \text{otherwise}.
    \end{cases}
\end{equation}
The instantaneous network cache hit rate, $P_{\mathrm{hit}}(t)$, is calculated as the ratio of satisfied requests to the total number of active users:
\begin{equation}
    \label{eq:phit_def}
    P_{\mathrm{hit}}(t) = \frac{1}{|\mathcal{U}^{(t)}|} \sum_{u \in \mathcal{U}^{(t)}} \Ind(u, f_u^{(t)}).
\end{equation}

\textbf{Optimization Objective}
We formulate the optimization problem $\mathbf{P}$ to identify the optimal decision sequence $\mathbf{X}$ that maximizes the long-term average cache hit rate. While real-world traffic patterns are dynamic, we define the objective under the standard stationarity and ergodicity assumption on $\{\mathcal{Q}^{(t)}\}$ so that the long-run time average exists almost surely:

\noindent 
\begin{subequations}
\label{prob:P1}
\begin{alignat}{2}
    \mathbf{P:} \quad & 
    \underset{\mathbf{X}}{\text{maximize}} \quad & & \lim_{T \to \infty} \frac{1}{T} \sum_{t=1}^{T} P_{\mathrm{hit}}(t) \label{eq:obj} \\
    & \text{subject to} & & \sum_{f=1}^{F} x_{b,f}^{(t)} \le C_b, \quad \forall b, t, \label{eq:cap} \\
    & & & \sum_{f=1}^{F} |x_{b,f}^{(t)} - x_{b,f}^{(t-1)}| \le 2, ~ \forall b, t \ge 2, \label{eq:single_swap} \\
    & & & x_{b,f}^{(t)} \in \{0, 1\}, \quad \forall b, f, t. \label{eq:bin}
\end{alignat}
\end{subequations}

Constraint \eqref{eq:cap} enforces storage limits, and \eqref{eq:bin} ensures binary variables. Constraint \eqref{eq:single_swap} formalizes the single-file replacement interface. Directly solving $\mathbf{P}$ is intractable; the joint caching problem over overlapping coverage is known to be nondeterministic polynomial-time hard even with known demand~\cite{6600983}, while practical deployment is further complicated by the lack of a priori knowledge of future requests and the presence of non-stationary distribution shifts~\cite{Sadeghi2017OptimalScalableCaching}. Consequently, the controller must employ a learning-based approach to adapt the policy $\mathbf{X}$ online based on observed request streams. In the following section, we cast this sequential decision problem as a MDP and develop an executable, LLM-based policy learning framework.

\begin{figure*}[htbp]
    \centering
    \includegraphics[width=0.95\textwidth]{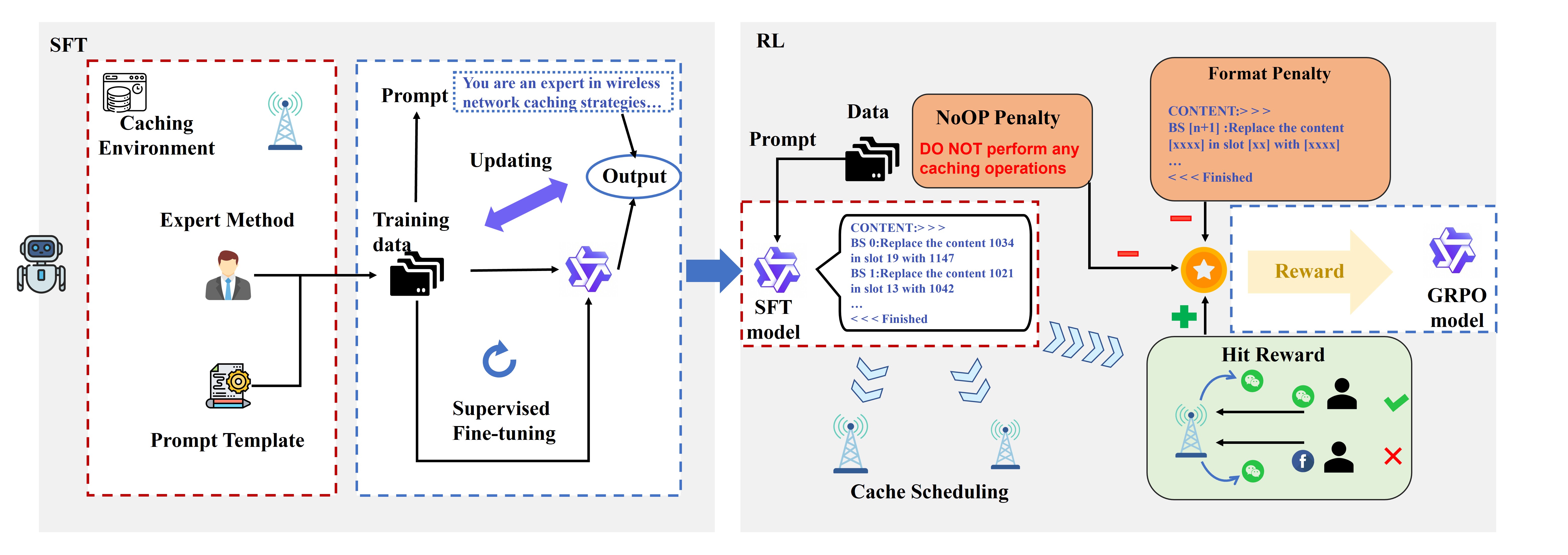} 
    \caption{Two-stage alignment pipeline. In Stage~I (SFT), the LLM is trained on expert trajectories to produce strictly executable cache-update actions. In Stage~II (GRPO), the policy is fine-tuned with a shaped multi-step gain reward together with explicit format and opportunity penalties, while actions are validated and executed through the same strict parser used at deployment.}
    \label{fig:training_pipeline}
\end{figure*}

\section{Methodology}
\label{sec:methodology}

This section details the proposed LLM-driven framework for cooperative multi-BS edge caching. As illustrated in Fig.~\ref{fig:training_pipeline}, we formulate the problem as a MDP with a strict textual interface. We implement a deterministic encode-parse-verify mechanism to instantiate the LLM as an executable centralized policy, aligned via a two-stage training paradigm: SFT on expert trajectories followed by GRPO.

\subsection{MDP Formulation and Executable Interface}
\label{subsubsec:related_definitions}

We model the online cache update process as a sequential decision process where a centralized coordinator observes the global network state and determines specific update actions for all BSs simultaneously.

\paragraph{State Space}
The state $s_t$ at slot $t$ encapsulates the instantaneous system status and historical demand context. Formally, we define the state as the tuple:
\begin{equation}
    s_t \triangleq \left( \mathbf{X}^{(t)}, \{\mathcal{R}^{(b)}_t\}_{b\in\mathcal{B}}, \boldsymbol{\phi}^{(t)} \right),
\end{equation}
where $\mathbf{X}^{(t)}$ represents the current global cache configuration, and $\mathcal{R}^{(b)}_t$ is the admissible insertion set for BS $b$ derived from current requests. To capture time-varying popularity trends, $\boldsymbol{\phi}^{(t)}$ denotes a collection of multi-scale request-frequency features. Let $\mathcal{W} \subset \mathbb{N}_{+}$ be a set of window lengths. For each file $f$ and BS $b$, we compute the historical appearance rate $\phi_{b,f}^{(t)}(w)$ over window $w \in \mathcal{W}$:
\begin{equation}
    \phi_{b,f}^{(t)}(w) \triangleq 
    \frac{1}{\min\{w,t\}}
    \sum_{\tau=\max\{1,t-w+1\}}^{t} \Ind(f \in \mathcal{R}^{(b)}_{\tau}\big),
\end{equation}
where $\Ind(\cdot)$ is the indicator function. By aggregating statistics over varying horizons, $\boldsymbol{\phi}^{(t)}$ enables the LLM to distinguish between transient noise and stable popularity shifts.

\paragraph{Action Space and Feasibility}
At each slot $t$, the decision for BS $b$, denoted by $a_{b,t}$, is either to maintain the current cache or to perform a single-file replacement. A replacement action is characterized by the tuple $(z_b, f_b^{\mathrm{in}}, f_b^{\mathrm{out}})$, where $z_b \in \{1,\dots,C_b\}$ is the cache slot index, $f_b^{\mathrm{in}}$ is the new content to insert, and $f_b^{\mathrm{out}}$ is the content to be evicted. The joint action is denoted by $a_t = (a_{1,t}, \dots, a_{B,t})$.

To ensure strict physical executability, we define a feasibility set function $\mathcal{A}(s_t)$. A joint action $a_t$ is valid, i.e., $a_t \in \mathcal{A}(s_t)$, if and only if every component $a_{b,t}$ satisfies the following constraints:
\begin{enumerate}
    \item \textbf{Admissibility:} The inserted content must be requested in the current slot:
    \begin{equation}
        f_b^{\mathrm{in}} \in \mathcal{R}^{(b)}_t.
    \end{equation}
    \item \textbf{Non-Duplication:} The content to be inserted must not already exist in the cache of BS $b$:
    \begin{equation}
        x_{b, f_b^{\mathrm{in}}}^{(t)} = 0.
    \end{equation}
    \item \textbf{Consistency:} The content targeted for eviction must explicitly match the file currently stored at the specified location. If slot $z_b$ stores file $f$, then:
    \begin{equation}
        f_b^{\mathrm{out}} = f \quad \text{s.t.} \quad x_{b,f}^{(t)} = 1.
    \end{equation}
\end{enumerate}

\paragraph{Optimization Objective}
Our ultimate objective remains the long-term average cooperative hit rate in Eq.~\eqref{eq:obj}. For policy learning, however, we adopt the following discounted surrogate objective~\cite{712192}:
\begin{equation}
\label{eq:objective_baseline}
    \max_{\pi} \;\; J(\pi) = \E_{\pi} \left[ \sum_{t=0}^{T-1} \gamma^{t} \, P_{\mathrm{hit}}(t) \right],
\end{equation}
where $\gamma \in (0,1]$ is the discount factor that balances immediate rewards against long-term cache utility. Under bounded rewards and ergodic continuing dynamics, choosing $\gamma$ close to one biases learning toward policies with strong long-run average performance, while also producing stable finite-horizon rollout targets for practical training. We therefore use Eq.~\eqref{eq:objective_baseline} only as a tractable training surrogate and report the undiscounted average cooperative hit rate in all experiments.

\paragraph{LLM-Executable Text-to-Action Interface}
To deploy an LLM parameterized by $\theta$ as the decision module of policy $\pi$, we formalize the control loop through three deterministic transformations:
\begin{itemize}
    \item \textbf{State-to-Prompt Encoder ($\mathcal{T}$):} The mapping $q_t = \mathcal{T}(s_t)$ converts the numerical state tuple---including cache content $\mathbf{X}^{(t)}$, request sets $\{\mathcal{R}^{(b)}_t\}$, and frequency features $\boldsymbol{\phi}^{(t)}$---into a structured textual prompt.
    \item \textbf{LLM Generation ($\pi_\theta$):} Conditioned on prompt $q_t$, the model generates a textual completion $o_t \sim \pi_\theta(\cdot|q_t)$.
    \item \textbf{Text-to-Action Parser ($\mathcal{P}$):} The parser $\mathcal{P}(o_t)$ first checks the required output format and then validates the decoded joint action against the feasibility set. If decoding succeeds and the joint action is feasible, it returns an executable action $a_t \in \mathcal{A}(s_t)$; otherwise, we denote the parser output as \Invalid\ and deterministically execute the all-BS \NoOp\ action. No heuristic fallback is used in the main protocol. During GRPO training, such invalid outputs additionally receive the formatting penalty defined in Eq.~\eqref{eq:penalty_term}.
\end{itemize}
Consequently, the induced executable policy is defined by the composition $s_t \xrightarrow{\mathcal{T}} q_t \xrightarrow{\pi_\theta} o_t \xrightarrow{\mathcal{P}} a_t$.

\subsection{Training Stage I: SFT}
\label{subsec:sft}

The primary objective of the first training stage is to bridge the gap between the open-ended generation capabilities of the LLM and the strict control interface of the caching environment. We define this code-alignment contribution as achieving two distinct goals:
\begin{enumerate}
    \item \textbf{Syntax Alignment:} Ensuring the model output $o_t$ strictly adheres to the format required by the parser $\mathcal{P}$, thereby minimizing syntax-based rejections.
    \item \textbf{Semantic Initialization:} Imparting the fundamental logic of validity and establishing a robust baseline policy derived from expert demonstrations.
\end{enumerate}

\subsubsection{Expert Data Generation}
We construct a completion-only dataset $\mathcal{D}_{\mathrm{sft}}=\{(q_t, y_t)\}$, where $q_t = \mathcal{T}(s_t)$ is the prompt rendering of state $s_t$, and $y_t$ is the target completion string. The target $y_t$ is obtained by serializing the expert joint action $a^*_t = (a^*_{1,t}, \dots, a^*_{B,t})$ into the requisite textual format.

To generate $a^*_t$, we employ a decoupled single-step exhaustive search expert. A full joint exhaustive search over all BSs is computationally intractable because the joint action space size, $|\mathcal{A}(s_t)| = \prod_{b=1}^{B} |\mathcal{A}_b(s_t)|$, grows exponentially with $B$ (as formally established in Proposition~\ref{prop:joint_action_growth}, Appendix~\ref{app:joint_exhaustive}), where $\mathcal{A}_b(s_t)$ denotes the feasible action set for BS $b$. Instead, the expert optimizes each BS $b$ independently by enumerating all actions in $\mathcal{A}_b(s_t)$ and selecting the one that maximizes the local contribution to the cooperative hit rate over a look-ahead horizon $H$. While this decoupled approach ignores inter-BS coupling, it serves to align the LLM with strict action validity and provides a robust initialization for the subsequent cooperative learning stage.

The data generation process, summarized in Algorithm~\ref{alg:sft_train}, begins with a warm-up phase of length $T_w$. This ensures that the cache states $\mathbf{X}^{(t)}$ and the historical frequency statistics $\boldsymbol{\phi}^{(t)}$ are fully populated and representative of steady-state operation before sampling begins.

\subsubsection{Supervised Fine-Tuning}
We fine-tune the model parameter $\theta$ using quantized low-rank adaptation (QLoRA)~\cite{dettmers2023qlora} with a rank of $r{=}64$. The training objective is to maximize the likelihood of generating the expert decision string $y_t$ given the state prompt $q_t$.
Let $y_t = (y_{t,1}, y_{t,2}, \dots, y_{t, L_t})$ be the token sequence of length $L_t$ corresponding to the target completion. The loss function $\mathcal{L}_{\mathrm{sft}}(\theta)$ is defined as the standard causal language modeling loss:
\begin{equation}
\label{eq:sft_loss}
\mathcal{L}_{\mathrm{sft}}(\theta)
=
-\E_{(q,y)\sim\mathcal{D}_{\mathrm{sft}}}
\left[
\sum_{n=1}^{L_t} \log \pi_{\theta}\!\left(y_n \mid q, y_{<n}\right)
\right],
\end{equation}
where $y_{<n}$ denotes the prefix of the target sequence.
To concentrate the model on the non-trivial replacement logic, we selectively collect SFT samples only when all BS caches are full, as this regime requires the agent to make complex eviction decisions rather than simple insertions. By contrast, cache-filling states contribute relatively few samples and are dominated by trivial insertions, which can dilute the replacement-oriented patterns that this stage is intended to teach.
Throughout SFT, we keep the state encoder $\mathcal{T}(\cdot)$ and the strict parser $\mathcal{P}(\cdot)$ fixed. This ensures that the training process directly optimizes the exact executable interface used during deployment.

\begin{algorithm}[t]
\caption{Stage I: SFT Data Generation and Training}
\label{alg:sft_train}
\begin{algorithmic}[1]
\Require Environment $\mathcal{E}$; encoder $\mathcal{T}$; horizon $H$; warmup $T_w$; target size $N$
\State Initialize dataset $\mathcal{D}_{\mathrm{sft}}\leftarrow \emptyset$
\State Reset $\mathcal{E}$ and run expert policy for $T_w$ steps (warm-up)
    \While{$|\mathcal{D}_{\mathrm{sft}}| < N$}
        \State Observe state $s_t$; compute request sets $\{\mathcal{R}^{(b)}_t\}$
    \For{$b=1$ to $B$}
        \State $a^*_{b,t} \leftarrow \textsc{LookaheadExpert}(\mathcal{E}, b, \mathcal{R}^{(b)}_t, H)$
    \EndFor
    \If{all BS caches are full}
        \State $q_t \leftarrow \mathcal{T}(s_t)$
        \State $y_t \leftarrow \text{Serialize}(\{a^*_{b,t}\}_{b=1}^{B})$ \Comment{Convert actions to text}
        \State $\mathcal{D}_{\mathrm{sft}} \leftarrow \mathcal{D}_{\mathrm{sft}} \cup \{(q_t, y_t)\}$
    \EndIf
    \State Step $\mathcal{E}$ using joint action $\{a^*_{b,t}\}_{b=1}^{B}$
\EndWhile
\State Train $\pi_{\theta}$ on $\mathcal{D}_{\mathrm{sft}}$ by minimizing Eq.~\eqref{eq:sft_loss}
\State \Return SFT model $\pi_{\theta_{\mathrm{sft}}}$
\end{algorithmic}
\end{algorithm}

\subsection{Training Stage II: GRPO Fine-Tuning}
\label{subsec:grpo}

While SFT establishes valid action syntax and initializes the policy, it relies on a decoupled approximation that ignores inter-BS correlations and lacks exploration. To transcend these limits, we employ GRPO to strictly optimize the cooperative objective. We select GRPO specifically because our reward structure measures fine-grained marginal gains over a \NoOp\ baseline; unlike value-based critics that struggle to estimate such subtle differences, GRPO's group normalization amplifies relative improvements, ensuring robust gradient signals without the complexity and instability of training a separate value network.

\subsubsection{Opportunity-Aware Reward Design}
The core of the RL stage is a reward mechanism that assigns credit to actions yielding deferred cooperative benefits while enforcing strict constraints. We construct a composite reward signal $R(o_t)$ based on the delta-weighted cooperative gain.

First, we define the look-ahead value function. Let $\mathbf{X}$ denote a generic global cache configuration. We generalize the instantaneous hit rate to $P_{\mathrm{hit}}(\mathbf{X}, \mathcal{Q})$, representing the hit rate of configuration $\mathbf{X}$ under request set $\mathcal{Q}$. Under the frozen-trajectory protocol, let $\mathcal{Q}^{(t+1:t+H)} \triangleq (\mathcal{Q}^{(t+1)}, \dots, \mathcal{Q}^{(t+H)})$ denote the fixed sequence of future requests. The normalized discounted look-ahead value $\mathcal{H}_H$ is defined as:
\begin{equation}
\label{eq:lookahead_val}
\mathcal{H}_H\big(\mathbf{X}; \mathcal{Q}^{(t+1:t+H)}\big) \triangleq \frac{\sum_{k=1}^{H} \gamma^{k-1} P_{\mathrm{hit}}\big(\mathbf{X}, \mathcal{Q}^{(t+k)}\big)}{\sum_{k=1}^{H} \gamma^{k-1}}.
\end{equation}
In our implementation, this look-ahead evaluator is realized by directly querying the frozen future request trace during the training phase, providing a computationally efficient and low-variance signal for policy optimization.

Next, we quantify the marginal improvement generated by the model's output. At slot $t$, let $o_t$ be the textual completion generated by the policy, and let $a_t \triangleq \mathcal{P}(o_t)$ denote the parsed joint action, where $a_t = \Invalid$ indicates a non-executable output. If $a_t$ is valid, the environment transitions from the current state $\mathbf{X}^{(t)}$ to a new state $\mathbf{X}^{(t+1)}$ according to the update rule: $\mathbf{X}^{(t+1)} = \text{Update}(\mathbf{X}^{(t)}, a_t)$. Otherwise, the executor deterministically falls back to the all-BS \NoOp\ action, so the cache state remains unchanged, i.e., $\mathbf{X}^{(t+1)}=\mathbf{X}^{(t)}$. The delta-weighted gain, $\Delta_{\text{perf}}(a_t)$, is the difference in future potential between the post-action state and the current state:
\begin{equation}
\label{eq:delta_perf_def}
\begin{aligned}
\Delta_{\text{perf}}(a_t) = & \mathcal{H}_H\big(\mathbf{X}^{(t+1)}; \mathcal{Q}^{(t+1:t+H)}\big) \\
& - \mathcal{H}_H\big(\mathbf{X}^{(t)}; \mathcal{Q}^{(t+1:t+H)}\big).
\end{aligned}
\end{equation}
This differential formulation acts as a baseline subtraction, ensuring that rewards reflect the quality of the decision rather than the intrinsic request intensity of the time slot.

To strictly enforce the interface and suppress suboptimal passivity, we define a penalty function $\Psi(o_t)$:
\begin{equation}
\label{eq:penalty_term}
\Psi(o_t) = 
\begin{cases} 
    \lambda_{\text{fmt}}, & \text{if } a_t = \Invalid \quad \text{(invalid output)}, \\
    \lambda_{\text{opp}}, & \text{if } a_t = \NoOp \text{ and } a^*_t \neq \NoOp, \\
    0, & \text{otherwise}.
\end{cases}
\end{equation}
where $\lambda_{\text{fmt}} < 0$ is a static formatting penalty, and $\lambda_{\text{opp}} < 0$ is a dynamic opportunity penalty. Here, $a^*_t$ is the expert joint action defined in Section~\ref{subsec:sft}; we use the condition $a^*_t \neq \NoOp$ as a training-time witness that a beneficial replacement exists, avoiding explicit counterfactual enumeration (see Section~\ref{subsec:reward_analysis}). This selectively discourages passivity without altering the preference ordering among valid active updates. For invalid outputs, the parser returns \Invalid, the executor applies the all-BS \NoOp\ fallback, and the resulting performance gain is therefore zero.

The final scalar reward $R(o_t)$ is the clipped sum of the gain and the penalty:
\begin{equation}
\label{eq:final_reward}
R(o_t) = \mathrm{clip}\big(\Delta_{\text{perf}}(a_t) + \Psi(o_t), -1, 1\big).
\end{equation}

\subsubsection{GRPO Objective}
We optimize the policy $\pi_{\theta}$ using GRPO, which eliminates the need for a separate value network critic. For each state prompt $q_t$, the behavior policy of the current update samples a group of $M$ independent completions $\{o_{t,i}\}_{i=1}^M$.
To reduce variance caused by non-stationary traffic patterns, we compute the group-relative Advantage $A_i$ by standardizing the rewards within the sampled group:
\begin{equation}
\label{eq:grpo_advantage}
A_i = \frac{R(o_{t,i}) - \mu_{\mathbf{R}}}{\sigma_{\mathbf{R}} + \epsilon},
\end{equation}
where $\mu_{\mathbf{R}}$ and $\sigma_{\mathbf{R}}$ are the mean and standard deviation of the reward set $\mathbf{R}=\{R(o_{t,j})\}_{j=1}^M$, and $\epsilon$ is a small stability constant.

The policy parameters $\theta$ are updated by maximizing the following surrogate objective, which incorporates a proximal policy optimization-style clipping mechanism~\cite{schulman2017proximal}. Let $\pi_{\text{old}}$ denote the behavior policy used to sample the group completions in the current GRPO update, and let $\pi_{\text{ref}}$ denote the fixed SFT initialization used only in the Kullback-Leibler (KL) regularizer:
\begin{equation}
\label{eq:grpo_obj}
\begin{aligned}
\mathcal{J}_{\text{GRPO}}(\theta) = \E_{q_t \sim \mathcal{D}} \Bigg[ & \frac{1}{M}\sum_{i=1}^{M} \min \Big( \rho_i(\theta)A_i, \text{clip}\big(\rho_i(\theta),\\ 1-\varepsilon, 1+\varepsilon\big)A_i \Big)
& - \beta D_{\mathrm{KL}}\big(\pi_{\theta}(\cdot|q_t) \| \pi_{\text{ref}}(\cdot|q_t)\big) \Bigg],
\end{aligned}
\end{equation}
where $\rho_i(\theta) = \frac{\pi_{\theta}(o_{t,i}|q_t)}{\pi_{\text{old}}(o_{t,i}|q_t)}$ is the unclipped likelihood ratio, $\varepsilon$ is the clipping range, and $\beta$ controls the strength of the KL regularization toward the fixed SFT reference.

\subsection{Reward Architecture Analysis}
\label{subsec:reward_analysis}
This subsection provides a potential-based reward shaping (PBRS)-style interpretation of the reward design. For clarity, the formal statements below analyze the executable reward surrogate with the outer clipping in Eq.~\eqref{eq:final_reward} set aside.

\subsubsection{Witness-Based Opportunity Penalty}
A key challenge is distinguishing intentional inaction (\NoOp) from missed improvements. Directly implementing the condition in Eq.~\eqref{eq:penalty_term} would require enumerating counterfactual write actions to verify if a gain-positive replacement exists, which is computationally prohibitive.

We avoid this by leveraging the supervised data pipeline: the GRPO stage utilizes training states where a reference action $a^*_t$ is available from the expert trajectory (Section~\ref{subsec:sft}). Let $a_t$ denote the parsed joint action from the model completion. We define the opportunity penalty $P_t(a_t)$ as:
\begin{equation}
    P_t(a_t) = 
    \begin{cases} 
    \lambda_{\text{opp}}, & \text{if } a_t = \NoOp \text{ and } a^*_t \neq \NoOp \\
    0, & \text{otherwise}
    \end{cases}
\end{equation}
where $\lambda_{\text{opp}} < 0$. Here, $a^*_t \neq \NoOp$ serves as a training-time \emph{witness} that a beneficial update exists, avoiding runtime enumeration of the action space. This targeted demotion only penalizes \NoOp\ when the reference expert identifies a gain-positive replacement, ensuring the relative ordering of valid write actions remains unaffected in the simplified analysis below.

\subsubsection{PBRS Interpretation and Witness-Based Opportunity Correction}
We now connect the multi-step gain term to PBRS-style intuition. Under the frozen-trajectory protocol, the gain in Eq.~\eqref{eq:delta_perf_def} can be viewed as a potential-difference shaping term~\cite{ng1999policy}. We therefore analyze the following executable reward surrogate:
\begin{equation}
\tilde{R}_t(a_t) = \Delta_{\text{perf}}(a_t) + P_t(a_t).
\end{equation}
Under the frozen-trajectory look-ahead protocol, we view the environment as having an augmented state $\bar{s}^{(t)}\triangleq (s_t,\mathcal{Q}^{(t+1:t+H)})$, i.e., appending the next $H$ request sets to the observable state.

\begin{theorem}[PBRS Invariance and Witness-Based Opportunity Correction]
\label{thm:shaping}
Define the state potential function as $\Phi(\bar{s}^{(t)}) \triangleq \mathcal{H}_H(\mathbf{X}^{(t)}; \mathcal{Q}^{(t+1:t+H)})$. For the executable reward surrogate $\tilde{R}_t$, the following properties hold:

\begin{enumerate}
    \item \textbf{Decision Invariance:} The performance gain $\Delta_{\text{perf}}$ constitutes a potential-difference signal. Maximizing the shaped reward is mathematically equivalent to maximizing the post-action potential:
 \begingroup
 \setlength{\abovedisplayskip}{3pt}
 \setlength{\belowdisplayskip}{3pt}
 \begin{equation}
\label{eq:invariance_derivation}
\begin{aligned}
    & \argmax_{a_t} \Delta_{\text{perf}}(a_t) \\
    = & \argmax_{a_t} \Big[ \Phi\big(\bar{s}^{(t+1)}(a_t)\big) - \Phi(\bar{s}^{(t)}) \Big] \\
    = & \argmax_{a_t} \Phi\big(\bar{s}^{(t+1)}(a_t)\big).
\end{aligned}
 \end{equation}
 \endgroup
    
    \item \textbf{Order Preservation and Witness-Based \NoOp\ Demotion:} The penalty $P_t$ preserves the relative ranking among write actions. Moreover, if the training-time witness satisfies $a_t^* \neq \NoOp$ and a write action $w$ yields positive gain, then:
    \begin{equation*}
    a_t^* \neq \NoOp \ \text{and}\ \Delta_{\text{perf}}(w) > 0 \implies \tilde{R}_t(w) > \tilde{R}_t(\NoOp).
    \end{equation*}
\end{enumerate}
\end{theorem}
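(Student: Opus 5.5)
The plan is to argue directly from the definitions. Throughout, $s_t$ and the frozen future trace $\mathcal{Q}^{(t+1:t+H)}$ are held fixed, and only $a_t$ varies.

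For part 1, we first identify the two terms of $\Delta_{\text{perf}}$ with potentials. The post-action augmented state is $\bar{s}^{(t+1)}(a_t)=(s_{t+1},\mathcal{Q}^{(t+2:t+H+1)})$. Its potential should be read as the look-ahead value of $\mathbf{X}^{(t+1)}=\text{Update}(\mathbf{X}^{(t)},a_t)$ evaluated on the same window $\mathcal{Q}^{(t+1:t+H)}$ used in Eq.~\eqref{eq:delta_perf_def}. We make this identification explicit as the definition of $\Phi(\bar{s}^{(t+1)}(a_t))$ in the decision-time sense, since the window is frozen at slot $t$. Comparing with Eq.~\eqref{eq:delta_perf_def} then gives, term by term,
\[
\Delta_{\text{perf}}(a_t)=\Phi\big(\bar{s}^{(t+1)}(a_t)\big)-\Phi(\bar{s}^{(t)}).
\]
This is the first equality in Eq.~\eqref{eq:invariance_derivation}.

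The second equality follows because $\Phi(\bar{s}^{(t)})$ depends only on $\mathbf{X}^{(t)}$ and the frozen trace, not on $a_t$. Subtracting a constant leaves the set of maximizers unchanged, since $g(a)-c\ge g(a')-c$ if and only if $g(a)\ge g(a')$. The \Invalid\ case should be noted in passing: it executes \NoOp, so $\mathbf{X}^{(t+1)}=\mathbf{X}^{(t)}$ and its gain is $0$, which is consistent with the same potential-difference formula. We can also remark that this is the one-step, $\gamma=1$ form of Ng et al.'s shaping $F=\gamma\Phi(s')-\Phi(s)$. That remark is interpretive only and is not needed for the argmax identity.

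For part 2, order preservation is immediate because $P_t(w)=0$ for every write action $w\neq\NoOp$. Hence $\tilde{R}_t(w)=\Delta_{\text{perf}}(w)$ on write actions, and for any two write actions $w,w'$ we get $\tilde{R}_t(w)-\tilde{R}_t(w')=\Delta_{\text{perf}}(w)-\Delta_{\text{perf}}(w')$. The ranking is therefore unchanged.

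For the demotion claim, we compute $\tilde{R}_t(\NoOp)$. \NoOp\ leaves $\mathbf{X}^{(t+1)}=\mathbf{X}^{(t)}$, so $\Delta_{\text{perf}}(\NoOp)=0$. Under the witness condition $a_t^*\neq\NoOp$, the penalty is active and $\tilde{R}_t(\NoOp)=\lambda_{\text{opp}}<0$. Therefore
\[
\tilde{R}_t(w)=\Delta_{\text{perf}}(w)>0>\lambda_{\text{opp}}=\tilde{R}_t(\NoOp).
\]
In fact $\Delta_{\text{perf}}(w)\ge 0$ would already suffice, because $\lambda_{\text{opp}}$ is strictly negative.

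The only delicate point is the part 1 identification of $\Phi(\bar{s}^{(t+1)})$ with the value computed on the window $\mathcal{Q}^{(t+1:t+H)}$ rather than the shifted window $\mathcal{Q}^{(t+2:t+H+1)}$. If the potential were literally evaluated on the shifted window, $\Delta_{\text{perf}}$ would not be an exact potential difference. I would therefore state explicitly that the potential of the successor is evaluated with the trace frozen at the decision time $t$, which is exactly the paper's frozen-trajectory convention. With that convention, both parts reduce to the elementary manipulations above, and the outer clipping is excluded as the statement specifies.
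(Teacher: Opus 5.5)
Your proposal is correct and follows the paper's proof essentially step for step: the paper writes $\Delta_{\text{perf}}$ as a potential difference, drops the action-independent term $\Phi(\bar{s}^{(t)})$ from the argmax, and then handles the ranking among write actions and the \NoOp\ demotion exactly as you do. The paper handles the window subtlety you flag by defining the post-action augmented state as $\bar{s}^{(t+1)}(a_t)=(s_{t+1}(a_t),\mathcal{Q}^{(t+1:t+H)})$ with the unshifted window and leaving $\Phi$ unchanged; this is slightly cleaner than your version, which first writes the shifted window and then reinterprets $\Phi$.
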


The proof of Theorem \ref{thm:shaping} is provided in Appendix~\ref{app:proof_thm1}. These properties explain why the baseline-subtracted gain is a well-aligned surrogate for executable write decisions and why the witness-based penalty discourages degenerate passivity.

\section{Simulation Experiments and Result Analysis}
\label{sec:simulation}

\graphicspath{{figs/}}

\providecommand{\safeincludegraphics}[2][]{%
  \IfFileExists{#2}{\includegraphics[#1]{#2}}{%
    \IfFileExists{figs/#2}{\includegraphics[#1]{figs/#2}}{\fbox{\texttt{Missing: #2}}}%
  }%
}


We evaluate the proposed framework under a frozen-trajectory protocol for reproducibility and paired comparison. For each seed, we generate a task instance consisting of (i) a user--BS association graph induced by a geometric overlap model and (ii) a multi-user request trace, and then reuse the same instance across all methods. This eliminates performance variance caused by stochastic topology realizations and isolates decision quality under the same strict single-swap interface.

\subsection{Experimental Setup}
\label{subsec:exp_setup}

\subsubsection{Simulation Environment and Traffic Model}
We simulate the slotted cache-update process described in Section~\ref{sec:system_model_problem}. We consider two network scales: a two-BS setting for ablations and a five-BS setting for scalable coordination. In each instance, BS locations are fixed on a simple layout, while user locations are sampled uniformly over the service area; each user is connected to all BSs within a coverage radius, yielding overlapping neighborhoods and non-trivial cooperative coupling.

User requests follow a grouped-Zipf model: each group draws a distinct, randomly permuted Zipf ranking over the library and users inherit their group's distribution. This construction creates heterogeneous and spatially non-uniform popularity, different BSs observe different mixtures of user groups, while retaining the heavy-tailed demand patterns commonly assumed in edge caching.

To avoid cold-start artifacts and ensure stable historical statistics, we warm-start each frozen instance by running a fixed expert prefill policy for $100$ slots to populate caches and history buffers. All controllers are then evaluated from the same warm-started state under the same frozen requests and association graph. Accordingly, the reported results characterize steady-state replacement performance rather than cold-start or partially filled-cache behavior.

\subsubsection{Implementation Details}
We instantiate the controller with \textbf{Qwen2.5-7B-Instruct}~\cite{qwen2025qwen25technicalreport} and fine-tune it via QLoRA to keep the backbone fixed while adapting to the strict prompt-action interface. For evaluation, we use greedy decoding together with the same parser as in training to remove generation randomness and ensure strict executability. If a generated completion fails parsing or feasibility validation, the executed joint action deterministically defaults to the all-BS \NoOp\ action.
Default parameters are summarized in Table~\ref{tab:sim_params}; unless explicitly swept, we use $\mathcal{W}=\{10,100,1000\}$ to expose short, medium, long-term demand signals.

\begin{table}[t]
  \centering
  \caption{Default simulation parameters (unless explicitly swept).}
  \label{tab:sim_params}
  \small
  \renewcommand{\arraystretch}{1.12}
  \begin{tabularx}{\columnwidth}{@{}X l@{}}
    \toprule
    Parameter & Value \\
    \midrule
    BSs $B$ & $2$, $5$ \\
    Users $U$ & $20$ (two-BS), $40$ (five-BS) \\
    Library size $F$ & $100$ (unless swept) \\
    Cache size $C_b$ & $10$ (all $b$, unless swept) \\
    Groups $G$ & $3$ \\
    Zipf skew $\alpha$ & $1.2$ (unless swept) \\
    History windows $\mathcal{W}$ & $\{10,100,1000\}$ slots \\
    Evaluation decoding & greedy ($T{=}0$) \\
    \bottomrule
  \end{tabularx}
\end{table}

\subsubsection{Baselines}
We benchmark the proposed LLM controllers against representative reactive eviction heuristics, a myopic exhaustive reference, and multiple DRL baselines. All methods operate under the same admissible insertion sets $\mathcal{R}^{(b)}_t$ and the same single-swap update constraint; the LLM-based controller additionally uses strict parsing and feasibility verification to guarantee executability. To broaden the RL comparison under the same centralized executable interface, we include Soft Actor-Critic (SAC), Actor-Critic (AC), Deep Deterministic Policy Gradient (DDPG), and a distributional reinforcement learning (Distributional RL) baseline. Although many multi-agent cooperative caching methods exist, many assume decentralized execution or centralized training with decentralized execution, with observation and execution models that differ from ours; we therefore focus on baselines that can be instantiated under the same frozen-trajectory protocol and strict single-swap action interface.
\begin{itemize}
    \item \textbf{Least Recently Used (LRU)}~\cite{ahmed2013analyzing}: This policy evicts the least recently requested cached content.
    \item \textbf{Least Frequently Used (LFU)}~\cite{jaleel2010high}: This strategy evicts the least frequently requested content based on accumulated counts, although it remains vulnerable to cache pollution under non-stationary demand.
    \item \textbf{First-In First-Out (FIFO)}~\cite{rossi2011caching}: This method evicts the oldest cached content based strictly on insertion order.
    \item \textbf{SAC}~\cite{haarnoja2018soft}: We employ the soft actor-critic baseline as a maximum-entropy DRL method, trained with a numerical state/action encoding under the same constrained action interface.
    \item \textbf{Actor-Critic (AC)}~\cite{konda1999actor}: This baseline jointly learns a policy network and a value critic under the same centralized observation and constrained action interface.
    \item \textbf{Deep Deterministic Policy Gradient (DDPG)}~\cite{lillicrap2020continuous}: This baseline uses an actor--critic architecture with deterministic policy updates to test whether continuous-control style training improves cache-update quality.
    \item \textbf{Distributional RL}~\cite{bellemare2017distributional}: This baseline uses a distributional reinforcement learning formulation under the same frozen trajectories and strict execution constraints.
    \item \textbf{Single-step Exhaustive Search}: This myopic reference enumerates \NoOp\ and all feasible single-slot replacements for each BS to maximize the immediate cooperative hit rate. While its per-slot cost scales as $\mathcal{O}(C_b|\mathcal{R}^{(b)}_t|)$, extending it to full joint enumeration across BSs is infeasible due to the exponential growth of the joint action space (see Appendix~\ref{app:joint_exhaustive}).
\end{itemize}

\begin{table*}[!t]
  \centering
  \caption{Two-BS performance on three frozen seeds. Prefix averages at slots $\{50,100,\dots,300\}$ and overall mean. Methods are grouped into LLM-based controllers, DRL baselines, and classical/myopic references. The \textbf{bold} and \underline{underlined} values represent the best and second-best performance, respectively.}
  \label{tab:two_bs_stepwise}
  \small
  \sisetup{detect-weight=true,detect-family=true,table-number-alignment=center}
  \setlength{\tabcolsep}{3pt}
  \renewcommand{\arraystretch}{1.18}
  \begin{tabular*}{\textwidth}{@{\extracolsep{\fill}} l 
    S[table-format=1.3] S[table-format=1.3] S[table-format=1.3] 
    S[table-format=1.3] S[table-format=1.3] S[table-format=1.3] S[table-format=1.3]}
    \toprule
    \multirow{2}{*}{Method} & \multicolumn{7}{c}{Average Cooperative Hit Rate ($\uparrow$)} \\
    \cmidrule(lr){2-8}
    & \multicolumn{1}{c}{Slot 50} & \multicolumn{1}{c}{Slot 100} & \multicolumn{1}{c}{Slot 150} 
    & \multicolumn{1}{c}{Slot 200} & \multicolumn{1}{c}{Slot 250} & \multicolumn{1}{c}{Slot 300} 
    & \multicolumn{1}{c}{Mean} \\
    \midrule
    GRPO LLM       & \bfseries 0.530 & \bfseries 0.536 & \bfseries 0.538 & \bfseries 0.545 & \bfseries 0.550 & \bfseries 0.552 & \bfseries 0.542 \\
    SFT LLM        & 0.522 & 0.526 & 0.525 & 0.534 & 0.538 & 0.540 & 0.531 \\
    \specialrule{0.08em}{0.10em}{0.10em}
    AC             & 0.448 & 0.449 & 0.454 & 0.455 & 0.467 & 0.486 & 0.460 \\
    DDPG           & 0.504 & 0.505 & 0.505 & 0.505 & 0.510 & 0.517 & 0.508 \\
    SAC            & 0.462 & 0.474 & 0.477 & 0.481 & 0.482 & 0.482 & 0.477 \\
    Distributional RL & 0.503 & 0.503 & 0.504 & 0.503 & 0.508 & 0.512 & 0.506 \\
    \specialrule{0.08em}{0.10em}{0.10em}
    Exhaustive     & {\underline{0.526}} & {\underline{0.531}} & {\underline{0.536}} & {\underline{0.540}} & {\underline{0.543}} & {\underline{0.541}} & {\underline{0.536}} \\
    LFU            & 0.471 & 0.494 & 0.498 & 0.503 & 0.506 & 0.507 & 0.497 \\
    LRU            & 0.491 & 0.499 & 0.503 & 0.507 & 0.508 & 0.507 & 0.502 \\
    FIFO           & 0.318 & 0.312 & 0.308 & 0.317 & 0.314 & 0.309 & 0.313 \\
    \bottomrule
  \end{tabular*}
\end{table*}

\subsection{Reward Design Validation}
\label{subsec:reward_ablation}

\begin{figure}[t]
  \centering
  \safeincludegraphics[width=0.9\columnwidth]{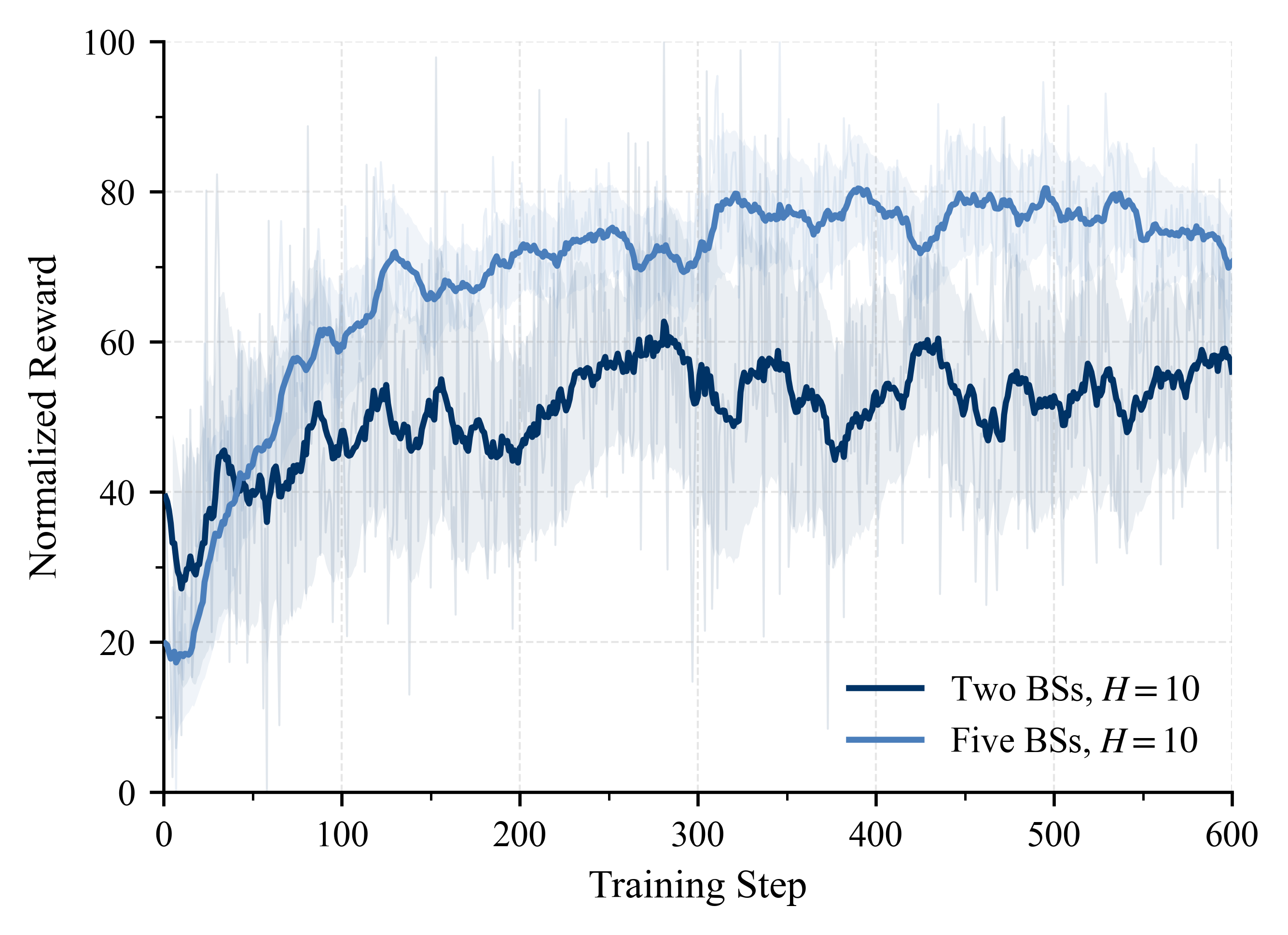}
  \caption{Representative GRPO training trajectories under strict executability (normalized reward vs.\ training step) for two-BS and five-BS settings.}
  \label{fig:reward_convergence_clean}
\end{figure}

\begin{figure}[t]
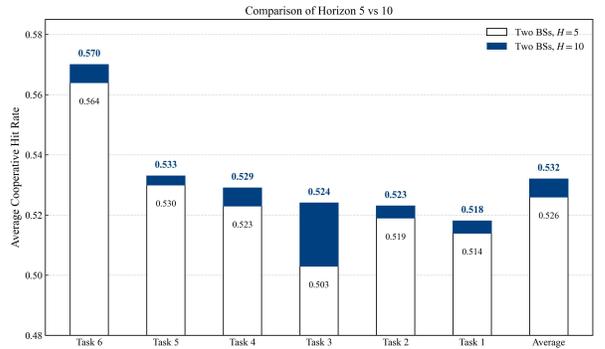

  \centering
  \safeincludegraphics[width=0.9\columnwidth]{111.jpg}
  \caption{Two-BS horizon comparison across three frozen tasks: training with $H{=}10$ consistently improves cooperative hit rate over $H{=}5$.}
  \label{fig:two_bs_reward_horizon_singlecol}
\end{figure}

Before conducting the full benchmark, we validate the GRPO reward design to ensure stable optimization.
Directly optimizing the raw cooperative hit rate can induce high-variance updates, as the absolute metric fluctuates substantially with request diversity and overlap patterns, obscuring the true quality of decisions.
We therefore adopt a baseline-subtracted reward that measures the relative improvement over a \NoOp\ baseline under the same cache and request state.
As illustrated in Fig.~\ref{fig:reward_convergence_clean}, this normalized signal enables the training curves to rise steadily and plateau without oscillation in both two-BS and five-BS settings, confirming that the constrained text-to-action policy is reliably learnable.
Furthermore, we investigate the impact of the look-ahead horizon $H$. As shown in Fig.~\ref{fig:two_bs_reward_horizon_singlecol}, extending the horizon from $H{=}5$ to $H{=}10$ consistently improves the cooperative hit rate. A longer horizon captures delayed benefits, such as reduced redundancy across overlapping BSs, that are invisible to shorter-term feedback.
Consequently, all subsequent experiments utilize the baseline-subtracted reward with $H{=}10$.

\begin{table*}[!t]
  \centering
  \caption{Five-BS performance on three frozen seeds. Prefix averages at slots $\{50,100,\dots,300\}$ and overall mean. Methods are grouped into LLM-based controllers, DRL baselines, and classical/myopic references. The \textbf{bold} and \underline{underlined} values represent the best and second-best performance, respectively.}
  \label{tab:five_bs_stepwise}
  \small
  \sisetup{detect-weight=true,detect-family=true,table-number-alignment=center}
  \setlength{\tabcolsep}{3pt}
  \renewcommand{\arraystretch}{1.18}
  \begin{tabular*}{\textwidth}{@{\extracolsep{\fill}} l 
    S[table-format=1.3] S[table-format=1.3] S[table-format=1.3] 
    S[table-format=1.3] S[table-format=1.3] S[table-format=1.3] S[table-format=1.3]}
    \toprule
    \multirow{2}{*}{Method} & \multicolumn{7}{c}{Average Cooperative Hit Rate ($\uparrow$)} \\
    \cmidrule(lr){2-8}
    & \multicolumn{1}{c}{Slot 50} & \multicolumn{1}{c}{Slot 100} & \multicolumn{1}{c}{Slot 150} 
    & \multicolumn{1}{c}{Slot 200} & \multicolumn{1}{c}{Slot 250} & \multicolumn{1}{c}{Slot 300} 
    & \multicolumn{1}{c}{Mean} \\
    \midrule
    GRPO LLM       & {\underline{0.615}} & {\underline{0.615}} & {\underline{0.609}} & {\underline{0.607}} & {\underline{0.606}} & {\underline{0.607}} & {\underline{0.610}} \\
    SFT LLM        & 0.595 & 0.595 & 0.587 & 0.588 & 0.584 & 0.584 & 0.589 \\
    \specialrule{0.08em}{0.10em}{0.10em}
    AC             & 0.566 & 0.565 & 0.567 & 0.564 & 0.569 & 0.573 & 0.567 \\
    DDPG           & 0.557 & 0.555 & 0.559 & 0.556 & 0.560 & 0.566 & 0.559 \\
    SAC            & 0.561 & 0.554 & 0.552 & 0.551 & 0.550 & 0.556 & 0.554 \\
    Distributional RL & 0.515 & 0.512 & 0.513 & 0.513 & 0.532 & 0.553 & 0.523 \\
    \specialrule{0.08em}{0.10em}{0.10em}
    Exhaustive     & \bfseries 0.626 & \bfseries 0.617 & \bfseries 0.616 & \bfseries 0.614 & \bfseries 0.616 & \bfseries 0.612 & \bfseries 0.617 \\
    LFU            & 0.597 & 0.591 & 0.586 & 0.584 & 0.579 & 0.580 & 0.586 \\
    LRU            & 0.582 & 0.579 & 0.577 & 0.571 & 0.567 & 0.566 & 0.574 \\
    FIFO           & 0.363 & 0.369 & 0.378 & 0.375 & 0.368 & 0.366 & 0.370 \\
    \bottomrule
  \end{tabular*}
\end{table*}

\subsection{Performance Evaluation}
\label{subsec:performance}

We report results on three frozen instances (seeds) per scenario, where both the request trace and the user--BS association graph are fixed as described in Section~\ref{subsec:exp_setup}. After warm-starting, each controller is rolled out for $T{=}300$ decision slots under the strict single-swap constraint. To summarize both early adaptation and steady-state stability within this warm-started setting, we report the prefix-average cooperative hit rate,
$\overline{P}_{\mathrm{hit}}(t)\triangleq \frac{1}{t}\sum_{\tau=1}^{t} P_{\mathrm{hit}}(\tau)$,
at $t\in\{50,100,\dots,300\}$ and the overall mean.

\paragraph{Two-BS Scenario}
Table~\ref{tab:two_bs_stepwise} confirms that GRPO consistently outperforms SFT and all baselines throughout the rollout. Remarkably, GRPO slightly exceeds even the single-step exhaustive reference. This advantage stems from a fundamental objective mismatch: while the exhaustive policy myopically maximizes the immediate hit rate, GRPO is trained on a multi-step gain. This training signal enables the agent to make strategic replacements that, while locally suboptimal in the current slot, effectively reduce future content redundancy across overlapping coverage zones.
In contrast, heuristic rules like LRU and LFU perform poorly because they rely solely on local recency or frequency signals. Lacking mechanisms to reason about network-level complementarity, they often replicate popular content across neighboring BSs, wasting aggregate storage capacity.
Among the DRL baselines, even the strongest one, DDPG, reaches only 0.508 on average, followed closely by Distributional RL at 0.506, while SAC and AC remain at 0.477 and 0.460, respectively. This consistent gap indicates that purely numerical RL baselines still struggle to learn the cooperative replacement logic captured by the LLM controllers. Without the structural priors provided by SFT and the stable, normalized gain signal used by GRPO, these RL baselines must explore a large, discrete, and feasibility-constrained action space using comparatively high-variance rewards, which limits their final performance.

\paragraph{Five-BS Scenario}
In the five-BS scenario, the coordination complexity intensifies significantly due to intricate spatial coupling and the exponential expansion of the joint action space.
Despite these challenges, the GRPO-tuned framework demonstrates robust scalability, achieving a mean cooperative hit rate of \textbf{0.610}.
This performance approaches the single-step exhaustive reference, reaching \textbf{98.9\%} of its mean hit rate (0.610 vs.\ 0.617).
Crucially, while this reference relies on computationally intensive online enumeration to maximize immediate returns, our LLM-based controller infers high-quality decisions directly from the prompt without the overhead of iterative search.
The method establishes a substantial margin over both the SFT initialization (0.589) and the best heuristic, LFU (0.586).
Among the DRL baselines, AC is the strongest at 0.567, followed by DDPG at 0.559, SAC at 0.554, and Distributional RL at 0.523.
This result reinforces the same conclusion observed in the two-BS case: once the action space and spatial coupling become more complex, standard RL baselines struggle to scale under absolute-reward training, whereas our framework benefits from SFT priors and the variance-reduced gain-based optimization of GRPO.

\begin{table}[t]
  \centering
  \caption{Two-BS zero-shot generalization across cache capacities. Best per row is bolded.}
  \label{tab:two_bs_zero_shot_cache}
  \small
  \setlength{\tabcolsep}{4.8pt}
  \renewcommand{\arraystretch}{1.14}
  \begin{tabular}{c r r r r r r}
    \toprule
    \multirow{2}{*}{Cache $C_b$} & \multicolumn{3}{c}{Heuristics} & \multicolumn{3}{c}{LLM / Reference} \\
    \cmidrule(lr){2-4}\cmidrule(lr){5-7}
    & FIFO & LRU & LFU & Exhaustive & SFT & GRPO \\
    \midrule
    10 & 0.289 & 0.488 & 0.501 & 0.521 & 0.531 & \textbf{0.544} \\
    15 & 0.371 & 0.589 & 0.598 & 0.616 & 0.612 & \textbf{0.624} \\
    20 & 0.440 & 0.669 & 0.674 & 0.681 & 0.675 & \textbf{0.683} \\
    25 & 0.501 & 0.729 & 0.728 & \textbf{0.739} & 0.731 & 0.734 \\
    30 & 0.555 & 0.771 & 0.771 & \textbf{0.775} & 0.764 & 0.770 \\
    \bottomrule
  \end{tabular}
\end{table}

\subsection{Generalization and Robustness Tests}
\label{subsec:generalization}

We assess the zero-shot generalization of a single trained controller under varying cache capacities, library sizes, demand skewness, and user scales. SAC is omitted from these robustness checks due to its lack of competitiveness in the primary evaluation. To ensure that performance variations strictly reflect generalization capability rather than additional learning or relaxed executability, all sweeps adhere to a frozen trajectory protocol with identical steady-state initialization and strict action verification.


\subsubsection{Two-BS zero-shot generalization}
\label{subsubsec:two_bs_zero_shot}

We begin with a two-BS capacity sweep, since changing $C_b$ affects both prompt length and the admissible action space.
We evaluate transfer from the training capacity $C_b{=}10$ to $C_b\in\{10,15,20,25,30\}$ without retraining.
As shown in Table~\ref{tab:two_bs_zero_shot_cache}, GRPO generalizes robustly across the full range without retraining, despite the change in both cache-state dimensionality and feasible single-swap actions.
The advantage is most pronounced at small capacities, where storage pressure is highest and each replacement decision has a large opportunity cost: a myopic eviction can permanently displace a moderately popular content that would be repeatedly requested in the near future.
As $C_b$ increases, all methods improve and the gap to exhaustive search naturally narrows, because a larger cache reduces the need for frequent evictions and makes performance less sensitive to fine-grained replacement choices.

\begin{table}[htbp]
    \centering
    \caption{Summary of Parameter Sweep Ranges}
    \label{tab:parameters}
    \normalsize
    \renewcommand{\arraystretch}{1.14}
    \begin{tabularx}{\columnwidth}{@{} X l @{}} 
        \toprule
        \textbf{Parameter} & \textbf{Search Range / Values} \\
        \midrule
        Cache capacity ($C_b$)     & $\{10, 15, 20, 25, 30\}$ \\
        Library size ($F$)         & $\{100, 300, 500, 700, 900, 1100\}$ \\
        Zipf skew ($\alpha$)       & $\{0.6, 0.8, 1.0, 1.2, 1.4, 1.6\}$ \\
        Users ($U$)                & $\{20, 40, 60, 80, 100\}$ \\
        \bottomrule
    \end{tabularx}
\end{table}

\subsubsection{Five-BS robustness sweeps}
\label{subsubsec:five_bs_sweeps}

We next stress-test robustness in the more realistic five-BS cooperative setting.
Here, generalization is not only about prompt length, but also about whether the learned policy remains effective under
different levels of (i) storage pressure ($C_b$), (ii) content diversity ($F$),
(iii) demand concentration ($\alpha$), and (iv) request load and diversity ($U$).
Each of these modifies the environment dynamics and the admissible replacement set in a different way,
therefore jointly probing whether the policy has learned reusable coordination principles.

To evaluate the robustness of the proposed method, we perform a comprehensive parameter sweep without retraining the model, while strictly adhering to the frozen-trajectory protocol and ensuring strict executability. The specific ranges for the parameters $C_b$, $F$, $\alpha$, and $U$ are summarized in Table~\ref{tab:parameters}.

\paragraph{Cache capacity $C_b$}
Fig.~\ref{fig:five_bs_cache_capacity} varies $C_b$.
As $C_b$ increases, the average cooperative hit rate improves for all methods due to reduced eviction pressure.
Across the sweep, GRPO consistently outperforms SFT and the classical heuristics, and its margin over LFU/LRU is most visible at small-to-moderate capacities (e.g., $C_b{=}10$--$20$), where storage is scarce and redundant replication across overlapping BSs is most costly.
In the mid-capacity regime ($C_b{=}15$--$25$), GRPO is highly competitive with the exhaustive reference, indicating that the learned coordination can effectively diversify placements while still keeping necessary replication for frequently requested head contents.
When $C_b$ becomes large (e.g., $C_b{=}30$), all methods converge because most high-demand contents can be retained simultaneously, so the gap between GRPO and simple heuristics naturally shrinks.

\begin{figure}[t]
  \centering
  \safeincludegraphics[width=0.88\columnwidth]{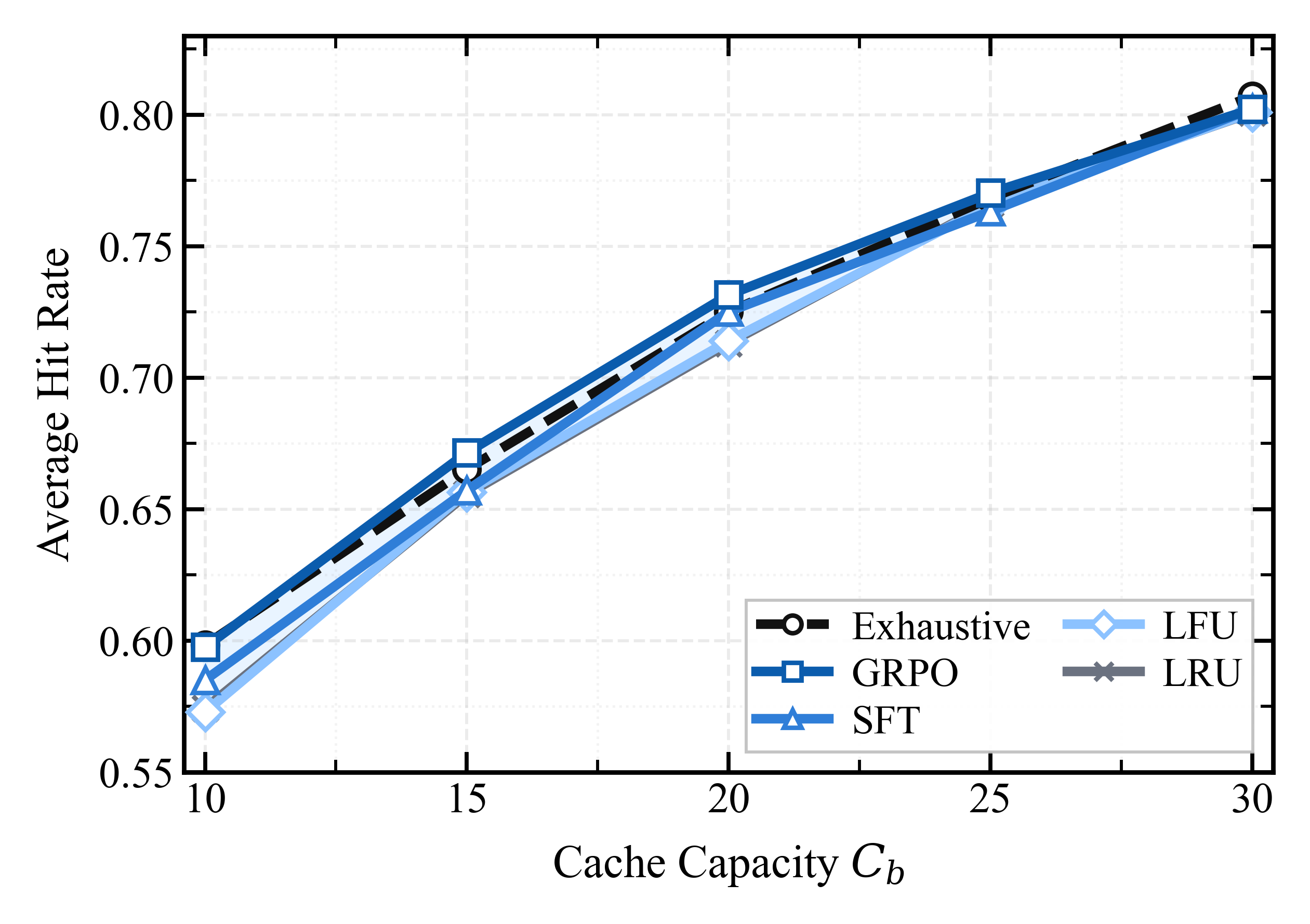}
  \caption{Five-BS robustness sweep: cooperative hit rate vs.\ cache capacity $C_b$ (zero-shot, no retraining).}
  \label{fig:five_bs_cache_capacity}
\end{figure}

\paragraph{Library size $F$}
Fig.~\ref{fig:five_bs_library_size} increases $F$.
This sweep probes robustness to content diversity and sparse reuse.
As $F$ grows, the cache-to-library ratio shrinks and repeated requests become less frequent, so all methods degrade substantially (from around $0.60$ at $F{=}100$ to around $0.44$--$0.45$ at $F{=}1100$).
In this regime, purely frequency/recency based rules can be brittle: LFU can accumulate stale counts and over-retain transiently popular contents, while LRU can overreact to one-off requests and induce churn.
GRPO degrades more gracefully and remains the strongest among the learned/heuristic baselines across the sweep, reflecting that the look-ahead training objective encourages replacements that improve future cooperative hits instead of reacting myopically to noisy single-slot evidence.
We also observe small crossings between GRPO and the exhaustive reference at the largest $F$, which are consistent with finite-horizon evaluation noise; overall, GRPO tracks the strongest reference closely while maintaining clear gains over SFT/LFU/LRU.

\begin{figure}[t]
  \centering
  \safeincludegraphics[width=0.88\columnwidth]{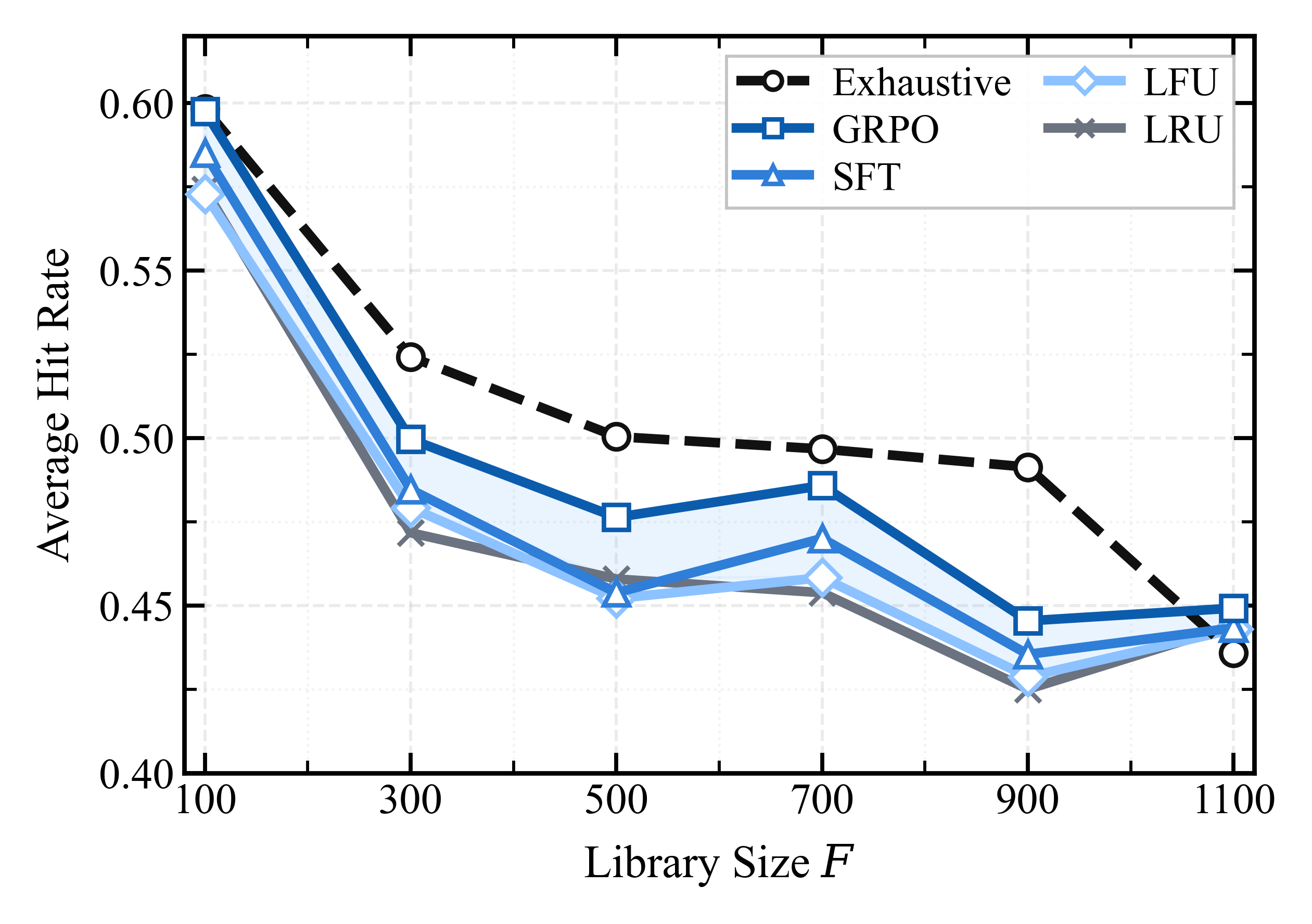}
  \caption{Five-BS robustness sweep: cooperative hit rate vs.\ library size $F$ (zero-shot, no retraining).}
  \label{fig:five_bs_library_size}
\end{figure}

\paragraph{Zipf skew $\alpha$}
Fig.~\ref{fig:five_bs_Zipf_sweep} varies $\alpha$, controlling demand concentration.
As $\alpha$ increases, demand becomes more concentrated on the head, so all methods improve substantially.
When the distribution is flatter-to-moderate (small $\alpha$, e.g., $0.6$--$1.0$), the long tail contributes significantly and cooperative coordination across overlapping BSs becomes essential; in this regime, GRPO maintains a consistent advantage over SFT and both heuristics by learning coverage-oriented placements under limited cache budgets.
As $\alpha$ becomes large, caching the head dominates performance and simple heuristics become stronger, so the advantage of sophisticated coordination diminishes and the curves move closer.
We further note that around $\alpha{=}1.4$ GRPO can match or slightly exceed the exhaustive reference in our traces, while for very concentrated demand (e.g., $\alpha{=}1.6$) the exhaustive reference remains highest and all methods are close due to the dominance of head contents.

\begin{figure}[t]
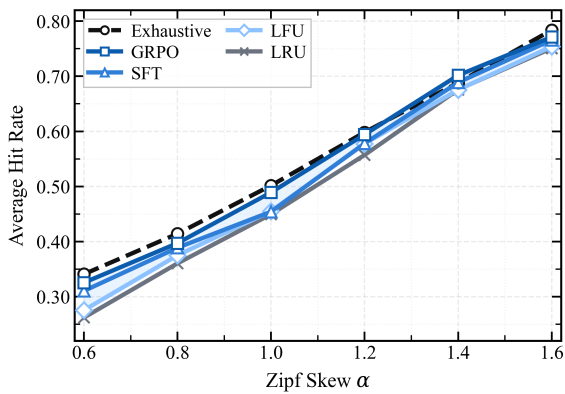

  \centering
  \safeincludegraphics[width=0.88\columnwidth]{five_bs_Zipf_sweep.png}
  \caption{Five-BS robustness sweep: cooperative hit rate vs.\ Zipf skew $\alpha$ (zero-shot, no retraining).}
  \label{fig:five_bs_Zipf_sweep}
\end{figure}

\paragraph{Number of Users $U$}
Fig.~\ref{fig:five_bs_user_sweep} varies the number of users $U$, which changes both the per-slot request volume and the diversity of the admissible insertion pools $\{\mathcal{R}^{(b)}_t\}$.
This introduces competing effects: a larger population provides richer statistical evidence to identify truly popular content, but simultaneously increases the instantaneous diversity of requests, heightening the risk of cache pollution.
We observe that the Single-step Exhaustive search performs robustly at lower user densities ($U \le 40$) but begins to degrade or stagnate as the population grows.
In contrast, GRPO exhibits superior scalability, overtaking the Exhaustive baseline at $U=60$ and achieving its peak performance at $U=80$.
This result highlights a critical advantage of the proposed method: in high-density regimes where $\mathcal{R}^{(b)}_t$ is large and noisy, the myopic Exhaustive search tends to "chase" transient requests, leading to suboptimal replacements. Conversely, GRPO's multi-step look-ahead training effectively filters this noise, prioritizing content with sustained cooperative utility over immediate but transient gains. Meanwhile, classical heuristics and SFT generally yield lower hit rates, confirming the necessity of reward-driven alignment for complex coordination.

\begin{figure}[t]
  \centering
  \safeincludegraphics[width=0.88\columnwidth]{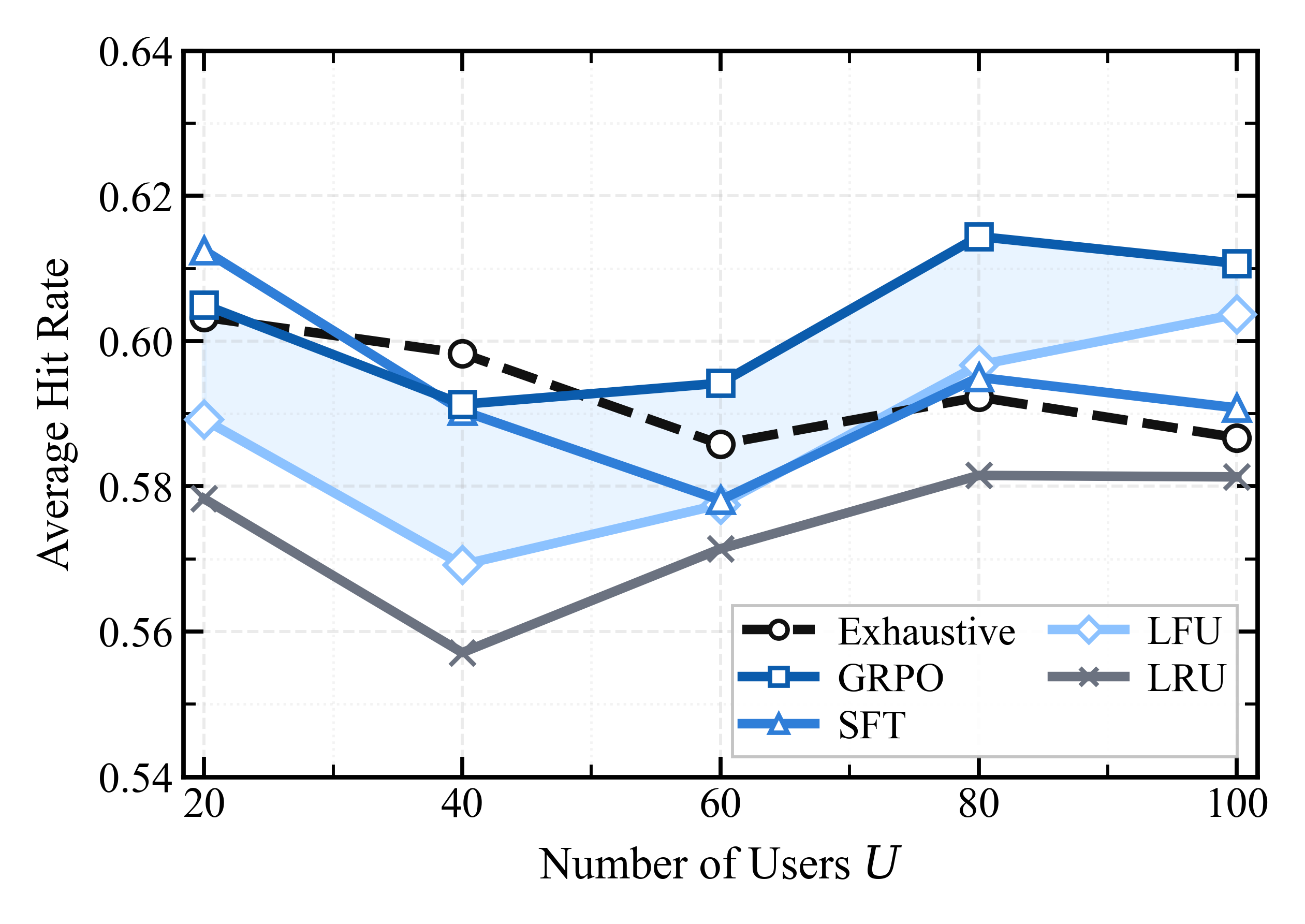}
  \caption{Five-BS robustness sweep: cooperative hit rate vs.\ number of users $U$ (zero-shot, no retraining).}
  \label{fig:five_bs_user_sweep}
\end{figure}

\paragraph{Summary}
Across both system scales and all zero-shot robustness sweeps, GRPO maintains robust improvements over SFT and classical eviction heuristics under strict executability.
Notably, the relative gains are most pronounced in the hardest regimes (small $C_b$, large $F$, and flatter popularity), where myopic local rules are most likely to waste capacity through redundant caching or cache pollution and where multi-step cooperative credit assignment is most valuable.
The two-BS capacity transfer confirms that the learned policy generalizes even when the prompt structure and action space change substantially,
while the five-BS sweeps demonstrate robustness to storage pressure, library diversity, demand skew, and user-driven request dynamics.

\section{Conclusion}
\label{sec:conclusion}

In this paper, we investigated cooperative multi-BS reactive cache replacement in wireless networks using an LLM-driven centralized coordinator. We formulated the task as a strictly constrained text-generation problem and proposed a two-stage alignment strategy combining SFT with GRPO to enforce feasibility and maximize long-term cooperative gains. We showed that the proposed framework effectively masters complex coordination logic without explicit modeling of the demand distribution. Numerical results under a rigorous frozen-trajectory protocol show that, in standard scenarios, the learned policy closely tracks the single-step exhaustive reference while consistently outperforming classical heuristics. Furthermore, we provided experimental evidence of the controller's zero-shot transferability, showing robust performance retention across varying cache capacities, library sizes, and user densities. Future work includes handling variable-sized content and incorporating inference latency constraints.

\bibliographystyle{IEEEtran}
\bibliography{references}

\appendices

\section{Proof of Theorem \ref{thm:shaping}}
\label{app:proof_thm1}

This appendix provides the detailed proof for the two properties claimed in Theorem~\ref{thm:shaping}. We demonstrate that maximizing the simplified shaped reward is mathematically equivalent to maximizing the potential gain (Decision Invariance) and that the witness-based penalty mechanism preserves the preference order among write actions while suppressing suboptimal passivity.

\subsection{Property 1: Decision Invariance}
Fix a decision slot $t$ and the frozen request window $\mathcal{Q}^{(t+1:t+H)}$ utilized by the look-ahead evaluator. Under the frozen-trajectory protocol, $\mathcal{Q}^{(t+1:t+H)}$ is exogenous and independent of the action chosen at slot $t$.

For any executable action $a_t$, let $\mathbf{X}^{(t+1)}(a_t)$ denote the cache configuration resulting from applying $a_t$ to the current cache state $\mathbf{X}^{(t)}$. Define the augmented current state as $\bar{s}^{(t)}\triangleq (s_t,\mathcal{Q}^{(t+1:t+H)})$ and the post-action augmented state as $\bar{s}^{(t+1)}(a_t)\triangleq (s_{t+1}(a_t),\mathcal{Q}^{(t+1:t+H)})$. Note that the request window remains invariant because Eq.~\eqref{eq:delta_perf_def} evaluates both cache configurations against the same future sequence.

Given the potential function $\Phi(\bar{s}^{(t)}) \triangleq \mathcal{H}_H(\mathbf{X}^{(t)}; \mathcal{Q}^{(t+1:t+H)})$, we can rewrite the performance gain in Eq.~\eqref{eq:delta_perf_def} for any $a_t$ as:
\begin{equation}
\Delta_{\text{perf}}(a_t) = \Phi\big(\bar{s}^{(t+1)}(a_t)\big) - \Phi(\bar{s}^{(t)}).
\end{equation}
Since $\Phi(\bar{s}^{(t)})$ depends exclusively on the pre-decision cache state and the frozen request window, it is constant with respect to the optimization variable $a_t$. Consequently, the optimization objective simplifies as follows:
\begingroup
\setlength{\abovedisplayskip}{3pt}
\setlength{\belowdisplayskip}{3pt}
\begin{equation}
\begin{aligned}
    & \argmax_{a_t} \Delta_{\text{perf}}(a_t) \\
    = & \argmax_{a_t} \Big[ \Phi\big(\bar{s}^{(t+1)}(a_t)\big) - \Phi(\bar{s}^{(t)}) \Big] \\
    = & \argmax_{a_t} \Phi\big(\bar{s}^{(t+1)}(a_t)\big).
\end{aligned}
\end{equation}
\endgroup
This derivation confirms the decision invariance property: maximizing the shaped reward is equivalent to maximizing the potential of the next state.

\subsection{Property 2: Order Preservation and Witness-Based \NoOp\ Demotion}
We analyze the preservation of reward ranking through two distinct cases.

\subsubsection*{Case 1: Ranking between two write actions}
Consider any two feasible write actions $w_1$ and $w_2$. By definition, the penalty term satisfies $P_t(w_1) = P_t(w_2) = 0$. The difference in their shaped rewards is:
\begin{equation}
\tilde{R}_t(w_1) - \tilde{R}_t(w_2) = \Delta_{\text{perf}}(w_1) - \Delta_{\text{perf}}(w_2).
\end{equation}
Thus, $\tilde{R}_t(w_1) > \tilde{R}_t(w_2)$ if and only if $\Delta_{\text{perf}}(w_1) > \Delta_{\text{perf}}(w_2)$, which implies that the preference ordering among write actions is strictly preserved.

\subsubsection*{Case 2: Witness-based demotion of \NoOp}
By definition, a \NoOp\ action implies $\mathbf{X}^{(t+1)} = \mathbf{X}^{(t)}$, yielding $\Delta_{\text{perf}}(\NoOp) = 0$.
Assume the training-time witness indicates that a beneficial write exists, i.e., $a_t^* \neq \NoOp$, and there exists a write action $w$ such that $\Delta_{\text{perf}}(w) > \Delta_{\text{perf}}(\NoOp) = 0$.
\begin{itemize}
    \item For the write action $w$:
    \begin{equation}
    \tilde{R}_t(w) = \Delta_{\text{perf}}(w) + P_t(w) = \Delta_{\text{perf}}(w) > 0.
    \end{equation}
    \item For the \NoOp\ action: Because the witness condition $a_t^* \neq \NoOp$ is active, the opportunity penalty is triggered, i.e., $P_t(\NoOp) = \lambda_{\text{opp}} < 0$. Thus:
    \begin{equation}
    \tilde{R}_t(\NoOp) = 0 + P_t(\NoOp) < 0.
    \end{equation}
\end{itemize}
Combining these results yields the strict inequality:
\begin{equation}
\tilde{R}_t(w) > 0 > \tilde{R}_t(\NoOp).
\end{equation}
This confirms that \NoOp\ is strictly demoted relative to any write action that yields a positive performance gain. 
\hfill \IEEEQEDclosed 

\section{Scalability of Joint Exhaustive Search}
\label{app:joint_exhaustive}

\begin{proposition}[Exponential growth of the joint action space]
\label{prop:joint_action_growth}
At slot $t$, let $\mathcal{A}_{b,t}$ denote the feasible action set of BS $b$ under the single-swap interface, and let $\mathcal{A}_t$ denote the joint action space of all $B$ BSs.
The joint action space is defined as $\mathcal{A}_t=\prod_{b\in\mathcal{B}}\mathcal{A}_{b,t}$, with cardinality $|\mathcal{A}_t|=\prod_{b\in\mathcal{B}}|\mathcal{A}_{b,t}|$.
If each BS admits at least one feasible write action in addition to \NoOp\ (i.e., $|\mathcal{A}_{b,t}|\ge 2$ for all $b$), then $|\mathcal{A}_t|\ge 2^{B}$, implying that the complexity of joint exhaustive enumeration grows exponentially with the number of BSs.
\end{proposition}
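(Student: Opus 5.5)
The argument is a direct counting bound; no earlier result is needed beyond the definition of the single-swap interface in Section~\ref{sec:system_model_problem}. First, I will justify the product structure. The centralized coordinator issues one decision per BS, and every feasibility condition in $\mathcal{A}(s_t)$ (admissibility, non-duplication, consistency) involves only the cache vector $\mathbf{x}_b^{(t)}$ and the set $\mathcal{R}^{(b)}_t$ of the BS concerned. The only coupling across BSs is through the objective, not through the constraints. Hence a joint action $a_t=(a_{1,t},\dots,a_{B,t})$ is feasible if and only if each $a_{b,t}\in\mathcal{A}_{b,t}$, so $\mathcal{A}_t=\prod_{b\in\mathcal{B}}\mathcal{A}_{b,t}$ as a Cartesian product. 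The elementary cardinality rule for finite Cartesian products then gives $|\mathcal{A}_t|=\prod_{b}|\mathcal{A}_{b,t}|$.

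Second, I will apply the hypothesis. Each factor satisfies $|\mathcal{A}_{b,t}|\ge 2$, since \NoOp\ plus at least one feasible write action gives two distinct actions. Multiplying these $B$ nonnegative inequalities yields $|\mathcal{A}_t|\ge 2^B$.

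Third, I will connect the count to complexity. Joint exhaustive enumeration must evaluate the objective at every element of $\mathcal{A}_t$, so its cost is $\Omega(|\mathcal{A}_t|)=\Omega(2^B)$, which is exponential in $B$. As a remark, I will add a sharper version for full caches: there $|\mathcal{A}_{b,t}|\le C_b|\mathcal{R}^{(b)}_t|+1=D_b^{(t)}$, which shows the base of the exponential can be much larger than $2$ in practice.

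I do not expect a genuine obstacle. The only point needing care is the first step: arguing that feasibility decouples across BSs, so that the joint set really is a product rather than a subset of one. I will handle this by inspecting the three per-BS constraints and noting that the capacity and Hamming-distance constraints \eqref{eq:cap}--\eqref{eq:single_swap} are also indexed per $b$.
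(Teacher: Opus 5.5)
Your proposal is correct and follows the paper's proof essentially step for step: the joint feasible set is the Cartesian product of the per-BS sets, its cardinality is the product of the factor sizes, and the termwise bound $|\mathcal{A}_{b,t}|\ge 2$ gives $|\mathcal{A}_t|\ge 2^{B}$. Your explicit check that the admissibility, non-duplication and consistency constraints each involve only one BS is a useful justification of a point the paper treats as definitional.

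One slip is in your optional remark. The inequality $|\mathcal{A}_{b,t}|\le C_b|\mathcal{R}^{(b)}_t|+1$ is an upper bound, so it cannot show that the base of the exponential exceeds $2$. For that, use the exact full-cache count $|\mathcal{A}_{b,t}|=C_b m_b+1$, where $m_b$ is the number of requested files not already cached at BS $b$. This gives $|\mathcal{A}_{b,t}|\ge C_b+1$ whenever $m_b\ge 1$.
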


\begin{proof}
By definition, the controller outputs exactly one action per BS. Thus, a joint action is represented as the tuple:
\begin{equation}
    a_t = \left( a_t^{(1)}, \dots, a_t^{(B)} \right), \quad \text{where } a_t^{(b)} \in \mathcal{A}_{b,t}.
\end{equation}
Consequently, the joint feasible set is the Cartesian product of the individual feasible sets:
\begin{equation}
    \mathcal{A}_t = \mathcal{A}_{1,t} \times \cdots \times \mathcal{A}_{B,t} = \prod_{b \in \mathcal{B}} \mathcal{A}_{b,t}.
\end{equation}
The cardinality of the joint action space is the product of the cardinalities of the individual action sets:
\begin{equation}
    |\mathcal{A}_t| = \prod_{b\in\mathcal{B}} |\mathcal{A}_{b,t}|.
\end{equation}
Given the condition that $|\mathcal{A}_{b,t}| \ge 2$ for all $b \in \mathcal{B}$, we derive the lower bound:
\begin{equation}
    |\mathcal{A}_t| = \prod_{b\in\mathcal{B}} |\mathcal{A}_{b,t}| \ge \prod_{b\in\mathcal{B}} 2 = 2^{B}.
\end{equation}
This exponential lower bound renders joint exhaustive search computationally intractable for large $B$.
\end{proof}

\end{document}